\newtheorem{example}{\textbf{Example}}
\newtheorem{definition}{\textbf{Definition}}
\newtheorem{theorem}{\textbf{Theorem}}
\newtheorem{lemma}{\textbf{Lemma}}
\newtheorem{corollary}{\textbf{Corollary}}
\newenvironment{proof}{\indent \textbf{\emph{Proof:}}}{\hfill $\blacksquare$}
\newenvironment{proofthm1}{\indent \textbf{\emph{Proof of Theorem 1:}}}{\hfill $\blacksquare$}
\newenvironment{counterex}{\indent \textbf{\emph{Counterexample:}}}{\hfill $\blacksquare$}
\newenvironment{example3}{\indent \emph{\textbf{Example}~3~(Cont'd):}}{}
\newcommand{\figwidth}{0.65\textwidth}
\newcommand{\figwidth}{0.42\textwidth}
\begin{document}
\title{Systematic Convolutional Low Density Generator Matrix Code}

\author{
Suihua~Cai,
Wenchao~Lin,
Xinyuanmeng~Yao,
Baodian~Wei,
and~Xiao~Ma~\IEEEmembership{Member,~IEEE}
\thanks{$^*$Corresponding author is Xiao Ma. This work was supported by the NSF of China~(No. 61771499 and No. 61972431), the Science and Technology Planning Project of Guangdong Province~(2018B010114001), the National Key R\&D Program of China~(2017YFB0802503) and the Basic Research Project of Guangdong Provincial NSF~(No. 2016A030308008 and No. 2016A030313298).}
\thanks{This work was presented in part at 2018 IEEE Information Theory Workshop and 2016 IEEE International Symposium on Turbo Codes \& Iterative Information Processing.}
\thanks{
The authors are with the School of Data and Computer Science and Guangdong Key Laboratory of Information Security Technology, Sun Yat-sen University, Guangzhou 510006, China (e-mail:  caish5@mail2.sysu.edu.cn, linwch7@mail2.sysu.edu.cn, yaoxym@mail2.sysu.edu.cn,  weibd@mail.sysu.edu.cn, maxiao@mail.sysu.edu.cn).}
}


\maketitle

\begin{abstract}
  In this paper, we propose a systematic low density generator matrix~(LDGM) code ensemble, which is defined by the Bernoulli process.
  We prove that, under maximum likelihood~(ML) decoding, the proposed ensemble can achieve the capacity of binary-input output symmetric~(BIOS) memoryless channels in terms of bit error rate~(BER).
  The proof technique reveals a new mechanism, different from lowering down frame error rate~(FER), that the BER can be lowered down by assigning light codeword vectors to light information vectors.
  The finite length performance is analyzed by deriving an upper bound and a lower bound, both of which are shown to be tight in the high signal-to-noise ratio~(SNR) region.
  To improve the waterfall performance, we construct the systematic convolutional LDGM~(SC-LDGM) codes by a random splitting process.
  The SC-LDGM codes are easily configurable in the sense that any rational code rate can be realized without complex optimization.
  As a universal construction, the main advantage of the SC-LDGM codes is their near-capacity performance in the waterfall region and predictable performance in the error-floor region that can be lowered down to any target as required by increasing the density of the uncoupled LDGM codes. Numerical results are also provided to verify our analysis.
\end{abstract}

\begin{IEEEkeywords}
capacity-achieving codes, coding theorem, low density generator matrix~(LDGM) codes, spatial coupling, systematic codes.
\end{IEEEkeywords}


\section{Introduction}\label{SEC_1}

\IEEEPARstart{T}{he} channel coding theorem in~\cite{Shannon1948Theory} states that, as long as the transmission rate is below the channel capacity, there exists a coding scheme with infinite coding length for arbitrarily reliable transmission.
Shannon proved the channel coding theorem by analyzing the performance of the random code ensemble, which has no constraint on linearity.
In~\cite{Elias1955Coding}, it was proved that the totally random linear code ensemble can achieve the capacity of binary symmetric channels~(BSCs).
The same theorem was proved in~\cite{Gallager68} by deriving the error exponent.
In both cases, the random binary linear code ensemble is enlarged to the random coset code ensemble by adding a random binary sequence to each codeword.
Such an enlargement is a general technique to prove coding theorems for code ensembles without total randomness~\cite{Shulman1999RandomCoding}.
The above theorems imply the existence of capacity-achieving~(linear) code but do not give practical constructions of good codes, since a typical sample from the random~(linear) code ensemble has no efficient decoding algorithm even over binary erasure channels~(BECs).
For this reason, more attention has been paid to sparse linear codes, which can be decoded by the iterative belief propagation~(BP) algorithm.

The well-known low density parity-check~(LDPC) codes, which were proposed in~\cite{Gallager1962ldpc} and rediscovered in~\cite{MacKay1997LDPC}~\cite{Spielman1996LDPC}, are a class of sparse linear codes with sparse parity-check matrices.
With the help of density evolution~(DE) analysis~\cite{Urbanke2001Irregular}, which is developed to analyze the performance of LDPC code ensembles under iterative decoding, many capacity-approaching LDPC code ensembles have been designed~\cite{Urbanke2001LDPC45}.
Another class of sparse linear codes is the low density generator matrix~(LDGM) codes, which have sparse generator matrices.
Compared with the LDPC codes, the main issue of the LDGM codes is their non-negligible error floors, which, however, can be lowered down by concatenating outer codes.
For example, Raptor codes~\cite{Shokrollahi2006Raptor}, as concatenated codes with outer linear block codes and inner LT codes~\cite{Luby2002LT}, are proved to be capacity-achieving LDGM codes for BECs. In~\cite{Kakhaki2012Sparse}, an LDGM code ensemble with generator matrix defined by the Bernoulli process was introduced and proved to be capacity-achieving over BSCs.
In the existing proofs, generator matrices are typically of non-systematic form.
Hence, the code rate of the ensemble is slightly lower than the design rate.
To the best of our knowledge, no direct proof is available in the literature for systematic code ensembles.
The difficulty lies in the fact that the systematic generator matrices have the unity matrix as a non-random part.

Recently, the spatial coupling of LDPC codes~\cite{Kumar2014ThresholdSaturation} has revealed itself as a powerful technique to construct codes that achieve capacity universally over binary-input output-symmetric~(BIOS) memoryless channels.
The spatially coupled codes exhibit a threshold saturation phenomenon~\cite{Urbanke2011ThresholdSaturation}, which has attracted a lot of interest in the past few years.
The threshold saturation has been proved for BECs~\cite{Urbanke2011ThresholdSaturation} and generalized to BIOS memoryless channels~\cite{Kudekar2010}~\cite{Kudekar2013}.
The spatial coupling technique can also be applied to the LDGM codes.
In~\cite{Kumar2014ThresholdSaturation}, spatially coupled LDGM codes were also proved to achieve the capacity of BIOS channels.

As extension works of~\cite{Ma2016Coding}~\cite{Lin2018Coding}, we introduce systematic LDGM code ensembles in a different way.
For conventional LDGM/LDPC codes, the sparsity of the generator/parity-check matrices is characterized by the degree distribution.
In contrast, the proposed systematic LDGM code ensembles are defined according to a Bernoulli process with a small success probability. 
Different from the conventional capacity-achieving codes in terms of frame error rate~(FER), the proposed LDGM codes are proved to be capacity-achieving over BIOS memoryless channels in terms of bit error rate~(BER).
The proof technique developed in this paper shows that the BER can be lowered down by assigning light codeword vectors to light information vectors.
An upper bound and a lower bound on BER are derived to analyze the finite length performance of the proposed LDGM codes and are shown by numerical results to be tight in the high signal-to-noise ratio~(SNR) region. 
To reason the mismatching between the iterative BP decoding performance and the derived bounds, we carry out density evolution analysis over BECs for simplicity.
The DE results motivate us to employ spatial coupling techniques, by which the generator matrices become sparser and the edges in the decoding graph become ``roughly independent''~\cite{Urbanke2011ThresholdSaturation},
and propose the systematic convolutional LDGM~(SC-LDGM) codes.
The main advantage of the SC-LDGM codes is their easily predicable performance, leading to a \emph{universal} but \emph{simple} approach to constructing good codes of any rates.
Numerical results show that, under iterative BP decoding algorithm, the SC-LDGM codes perform about $0.7~{\rm dB}$ away from the Shannon limits for various code rates.

This paper is organized as follows.
In Section~\ref{SEC_2}, we introduce the systematic LDGM code ensemble and prove the coding theorem.
In Section~\ref{SEC_3}, we derive an upper bound and a lower bound on BER to analyze the finite length performance.
We also present density evolution analysis over BECs to analyze the performance of iterative BP decoding in the near-capacity region.
In Section~\ref{SEC_4}, we construct the SC-LDGM codes by a random splitting process to lower down the density of the generator matrices.
Numerical results show that the SC-LDGM codes have better performance in the waterfall region and match well with the analytical bounds in the error floor region.
Finally, some concluding remarks are given in Section~\ref{SEC_5}.

\section{Coding Theorem of Systematic LDGM Block Codes}\label{SEC_2}
\subsection{Systematic LDGM Block Codes}
Let $\mathbb{F}_2\stackrel{\Delta}{=}\{0, 1\}$ be the binary field.
A binary linear code $\mathscr{C}[n, k]$ with length $n$ and dimension $k$ is defined as a $k$-dimensional subspace of $\mathbb{F}_2^n$, which can be characterized by a generator matrix or a parity-check matrix.
A code {\em ensemble} is a collection of codes, each of which is assigned with a probability.
A convenient way to define a linear code ensemble is to generate randomly according to certain distributions either the generator matrix or the parity-check matrix.
Particularly, the following two code ensembles are of theoretical importance.
\begin{itemize}
  \item The totally random linear code ensemble $\mathscr{C}_h[n,k]$ can be characterized by a parity-check matrix $\mathbf{H}$ of size $(n-k)\times n$, where each element of $\mathbf{H}$ is drawn independently from a uniformly distributed binary random variable.
      The typical minimum distance of this code ensemble has been analyzed in~\cite{Gallager1962ldpc} as a benchmark for the LDPC code ensemble.
  \item The totally random linear code ensemble $\mathscr{C}_{g}[n,k]$ can be characterized by a generator matrix $\mathbf{G}$ of size $k\times n$, where each element of $\mathbf{G}$ is drawn independently from a uniformly distributed binary random variable.
      In~\cite{Gallager68}, the channel coding theorem has been proved by analyzing the performance of this code ensemble.
\end{itemize}

In a strict sense, the above two totally random linear code ensembles are different.
The code ensemble $\mathscr{C}_g[n,k]$ has some samples with code rates less than $k/n$, and the code ensemble $\mathscr{C}_h[n,k]$ has some samples with code rates greater than $k/n$.
Typically, a sample from $\mathscr{C}_{h}[n,k]$~(or $\mathscr{C}_{g}[n,k]$), which has generator matrices of high density and parity-check matrices of high density, has no efficient decoding algorithms even over BECs. A more practical code ensemble is the well-known LDPC code ensemble, which is first introduced in~\cite{Gallager1962ldpc}.
A sample from the LDPC code ensemble has a parity-check matrix of low density and~(hence) can be iteratively decoded. In~\cite{Urbanke2001Irregular}\cite{MacKay1999Irregular}\cite{Urbanke2001Irregular2}, regular and irregular LDPC code ensembles were defined by Tanner graphs with certain degree distributions.

In this paper, we consider the systematic code ensemble $\mathscr{C}_s[n,k]$ defined by the generator matrix $\mathbf{G} = [\mathbf{I}~\mathbf{P}]$, where $\mathbf{I}$ is the identity matrix of order $k$ and $\mathbf{P}$ is a random matrix of size $k\times (n-k)$.
Clearly, no matter what distribution of $\mathbf{P}$ is, the code rate is exactly $R=k/n$.
For theoretical analysis, we focus on the systematic biased random  code ensembles, which is defined as follows.
\begin{definition}
A linear block code ensemble is called a \emph{systematic biased random  code ensemble} if the generator matrix has the form $\mathbf{G} = [\mathbf{I}~\mathbf{P}]$ of size $k\times n$, where
\begin{equation}\label{DEF_LDGM_BLOCK_EN}
\mathbf{P} = \left(
 \begin{array}{ccccc}
   P_{0,0} & P_{0,1} & \cdots & P_{0,n-k-1}\\
   P_{1,0} & P_{1,1} & \cdots & P_{1,n-k-1}\\
   \vdots           & \vdots           & \vdots & \vdots            \\
   P_{k-1,0} & P_{k-1,1} & \cdots & P_{k-1,n-k-1}\\
 \end{array}
\right)
\end{equation}
and $P_{i, j}~(0\leqslant i\leqslant k-1, 0\leqslant j\leqslant n-k-1)$ is generated independently according to the Bernoulli distribution with success probability $\Pr\{P_{i,j} = 1\} = \rho\leqslant 1/2$.
\end{definition}

For decoding purposes, we are interested in the case that $\rho \ll 1/2$. With $\rho \ll 1/2$, this code ensemble has typical samples with generator matrices of low density and hence is termed as \emph{systematic LDGM code ensemble}. This ensemble can also be characterized by a random parity-check matrix $\mathbf{H} = [\mathbf{P}^{T}~\mathbf{I}]$, which is typically of low density with $\rho \ll 1/2$.
Therefore, a systematic LDGM code ensemble can also be viewed as a special class of LDPC code ensemble.
The speciality lies in the fact that the degree polynomials associated with the LDGM code ensemble have different meanings.
For example, the degree polynomial with respect to check nodes, when viewed as an LDPC code ensemble, is given by
\begin{equation}
M(x)=\sum_{i=1}^{k+1}M_ix^i=x(1-\rho+\rho x)^{k},
\end{equation}
where $M_i$ represents the \emph{probability} that a check node has degree $i$.
In contrast, if $M(x)$ is interpreted as the degree polynomial of a conventional LDPC code ensemble, the coefficient $M_i$ represents the \emph{fraction} of check nodes of degree $i$.
To see the difference, let us consider a sample code $\mathscr{C}[n,k]$.
If it is sampled from the conventional LDPC code ensemble, it will have exact $(n-k)M_i$ check nodes of degree $i$.
In contrast, if it is sampled from the systematic LDGM code ensemble, it may even have no check node of degree $i$.
The conventional LDPC code ensemble usually has a constant maximum degree and no variable node of degree one, while the proposed ensemble has nodes of degree one and an increasing maximum degree with the coding length.
As a result, the proposed ensemble has some samples with high check node~(variable node) degrees, but the probability assigned to such samples is negligible with large $n$ and small $\rho$.

\subsection{Coding Theorem}
Suppose that $\boldsymbol{u}=(u_0,\cdots,u_{k-1})$ of length $k$ is the data to be transmitted\footnote{For a vector $\boldsymbol{s} = (s_0,\cdots,s_{\ell-1})$, we use $\boldsymbol{s}_i^j$ to denote the subsequence $(s_i,\cdots,s_j)$ of $\boldsymbol{s}$. We also use $\boldsymbol{s}^{\ell}$ to emphasize the length of $\boldsymbol{s}$.}. The coded vector $\boldsymbol{x}=\boldsymbol{u}\mathbf{G}=(\boldsymbol{u},\boldsymbol{u}\mathbf{P})$ of length $n$, where $\boldsymbol{u}$ and $\boldsymbol{u}\mathbf{P}$ are referred to as the information vector and the parity-check vector, respectively, is transmitted over a noisy channel, resulting in a received sequence $\boldsymbol{y}$ of length $n$. Then the decoder employs a decoding algorithm  and outputs $\hat{\boldsymbol{u}}$ according to $\boldsymbol{y}$ as an estimation of $\boldsymbol{u}$.

In this paper, we focus on BIOS memoryless channels. A BIOS channel is characterized by an input set $\mathcal{X}=\mathbb{F}_2=\{0,1\}$, an output set $\mathcal{Y}$~(discrete or continuous), and a conditional probability mass~(or density) function $\{P_{Y|X}(y|x),x\in\mathbb{F}_2,y\in\mathcal{Y}\}$\footnote{In the case without causing much ambiguity, we omit the subscript of the probability mass~(or density) function in the remainder of this paper.}, which satisfies the symmetric condition that
$P_{Y|X}(y|1)=P_{Y|X}(\pi(y)|0)$ for some mapping $\pi: \mathcal{Y}\rightarrow\mathcal{Y}$ with $\pi(\pi(y))=y$ for all $y\in\mathcal{Y}$.
The channel~(used without feedback) is said to be memoryless if $P_{\boldsymbol{Y}|\boldsymbol{X}}(\boldsymbol{y}|\boldsymbol{x})=\prod_{t=0}^{n-1}P_{Y|X}(y_t|x_t)$. Let $P_X(1)=p$ and $P_X(0)=1-p$ be an input distribution of a BIOS memoryless channel.
The mutual information between the input and the output is given by
\begin{equation}
I(p)=(1-p)I_0(p)+pI_1(p),
\end{equation}
where
\begin{align}
I_0(p)&=\sum_{y\in\mathcal{Y}}P_{Y|X}(y|0)\log\frac{P_{Y|X}(y|0)}{P_Y(y)},\\
I_1(p)&=\sum_{y\in\mathcal{Y}}P_{Y|X}(y|1)\log\frac{P_{Y|X}(y|1)}{P_Y(y)},
\end{align}
and $P_Y(y)=(1-p)P_{Y|X}(y|0)+pP_{Y|X}(y|1)$.
For a BIOS memoryless channel, we have $I_0(p)=I_1(p)=\max_{0\leqslant p\leqslant 1} I(p)$ at $p=\frac{1}{2}$, which is the channel capacity.

Assume that the input vector to the encoder is uniformly distributed over $\mathbb{F}_2^k$. Let $E=\{\hat{\boldsymbol{U}}\neq \boldsymbol{U}\}$ be the event that the decoder output is not equal to the encoder input. Let $E_{i}=\{{\hat U_{i}}\neq U_i\}$ be the event that the $i$-th decoder output bit is not equal to the $i$-th encoder input bit. Obviously, we have $E = \bigcup_{i=0}^{k-1} E_{i}$. Then, we can define frame error rate as ${\rm FER}={\rm Pr}\{E\}$ and bit error rate as ${\rm BER}=\frac{1}{k}\sum_{1\leqslant i \leqslant k}{\rm Pr}\{E_{i}\}=\frac{1}{k}\textbf{E}[W_H(\hat{\bm{U}}+{\bm{U}})]$, where $\textbf{E}[\cdot]$ denotes the expectation of the random variable and $W_H(\cdot)$ denotes the Hamming weight function. In the remainder of this paper, the maximum likelihood~(ML) decoding algorithm is considered for FER and the maximum \emph{a posteriori}~(MAP) decoding is considered for BER, unless otherwise specified.

\begin{definition}
A sequence of codes~(code ensembles) $\mathscr{C}[n, k]$ are said to be capacity-achieving in terms of FER, if, for any $\epsilon > 0$, $\lim_{n \to \infty}k/n\geqslant C-\epsilon$ and $\lim_{n \to \infty}{\rm FER} = 0$, where $C$ is the channel capacity.
\end{definition}

\begin{definition}
A sequence of codes~(code ensembles) $\mathscr{C}[n, k]$ are said to be capacity-achieving in terms of BER, if, for any $\epsilon > 0$, $\lim_{n \to \infty}k/n\geqslant C-\epsilon$ and $\lim_{n \to \infty}{\rm BER} = 0$, where $C$ is the channel capacity.
\end{definition}

It is easy to see that capacity-achieving codes in terms of FER are also capacity-achieving in terms of BER. However, the converse is not true. This subtle difference can be shown by the counterexample below.

\begin{counterex}
Consider a sequence of codes $\mathscr{C}[n, k]$ with generator matrices $\mathbf{G}$ of size $k\times n$ over a BSC parameterized by the cross error probability $P_e$. Suppose that $\mathscr{C}[n, k]$ is capacity-achieving in terms of FER. It can be proved that $\mathscr{C}[n, k]$ is also capacity-achieving in terms of BER. However, for the sequence of codes $\mathscr{C}[n+1, k+1]$ defined by

\begin{equation}
\tilde{\mathbf{G}} = \left(
        \begin{array}{cc}
           1 & \mathbf{0}\\
           \mathbf{0} & \mathbf{G}\\
        \end{array}
\right),
\end{equation}
we have
\begin{equation}
{\rm BER}_{\tilde{\mathbf{G}}} =  \frac{P_{e}+k{\rm BER}_{\mathbf{G}}}{k+1} \rightarrow 0
\end{equation}
as $k \rightarrow \infty$, but
\begin{equation}
{\rm FER}_{\tilde{\mathbf{G}}}\geqslant P_e.
\end{equation}
\end{counterex}

It is well known that, as first proved by Elias~\cite{Elias1955Coding}, the totally random linear code ensemble $\mathscr{C}_{g}[n,k]$ can achieve the capacity of BSCs~(in terms of FER). In~\cite[{Theorem} 6.2.1]{Gallager68}, Gallager proved the same theorem by the use of the general coding theorem for discrete memoryless channels~(DMCs), which can be easily adapted to other BIOS channels.
The proof is for the coset codes by adding a random vector on the codewords and employs the pairwise independency between codewords.
Then the coding theorem for systematic random linear block codes is deduced as a corollary. These existing coding theorems imply that the systematic code ensemble $\mathscr{C}_s[n,k]$ with $\rho = 1/2$ is capacity-achieving in terms of FER~(also in terms of BER). In this paper, we prove the coding theorem for the systematic code ensemble $\mathscr{C}_{s}[n,k]$ with any given positive  $\rho \leqslant 1/2$.

\begin{theorem}\label{THEO_CAPACITY}
For any given positive $ \rho\leqslant 1/2$, the systematic code ensemble $\mathscr{C}_{s}[n,k]$ is capacity-achieving in terms of BER over BIOS memoryless channels.
\end{theorem}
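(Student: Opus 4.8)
The plan is to show that the ensemble-averaged expected number of erroneous information bits grows sublinearly in $k$, so that dividing by $k$ drives the BER to zero for every rate below capacity. First I would reduce the problem to the transmission of the all-zero codeword: since each code in the ensemble is linear and the channel is BIOS, the conditional bit-error probabilities do not depend on the transmitted message, so I may condition on $\boldsymbol{U}=\boldsymbol{0}$. A pleasant consequence is that $P(\boldsymbol{y}\mid\boldsymbol{0})$ is independent of the random matrix $\mathbf{P}$ (the all-zero information always yields all-zero parity), which decouples the received statistics from the ensemble randomness. Because the rows of $\mathbf{P}$ are i.i.d., the $k$ systematic positions are exchangeable and ${\rm BER}=\Pr\{E_0\}$; equivalently I can work directly with $\textbf{E}[W_H(\hat{\boldsymbol{U}})]$.

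Next, I would upper bound the expected number of bit errors by a weighted union bound over competing information vectors. Since MAP decoding minimizes the BER, it suffices to analyze the event that a competitor $\boldsymbol{u}'$ is at least as likely as $\boldsymbol{0}$: if $\hat{\boldsymbol{u}}=\boldsymbol{u}'$ then $W_H(\boldsymbol{u}')$ bits are wrong, so $\textbf{E}[W_H(\hat{\boldsymbol{U}})]\le\sum_{\boldsymbol{u}'\ne\boldsymbol{0}}W_H(\boldsymbol{u}')\,\Pr\{\boldsymbol{u}'\ \text{beats}\ \boldsymbol{0}\}$. Grouping competitors by their information weight $w=W_H(\boldsymbol{u}')$ and averaging over $\mathbf{P}$, each of the $n-k$ parity differences becomes an independent $\mathrm{Bernoulli}(\rho_w)$ variable with $\rho_w=\tfrac{1}{2}(1-(1-2\rho)^w)$; the competitor's codeword thus differs from $\boldsymbol{0}$ in $w$ fixed systematic positions and in about $\rho_w(n-k)$ random parity positions. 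This is exactly where the stated mechanism enters: a light competitor (small $w$) has $\rho_w$ small and hence a light codeword, while a heavy competitor ($w$ comparable to $k$) has $\rho_w\to\tfrac12$ and therefore an essentially uniform-random parity.

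The delicate step is to turn this into a bound that is positive all the way up to capacity. With $\gamma$ the Bhattacharyya parameter of the channel, a crude estimate $\Pr\{\boldsymbol{u}'\ \text{beats}\ \boldsymbol{0}\}\le\gamma^{\,w+W_H(\boldsymbol{u}'\mathbf{P})}$ only yields convergence below the cutoff rate $1-\log_2(1+\gamma)$, so instead I would apply Gallager's tilted bound with a parameter $s\in(0,1]$: using $\mathbb{1}[\boldsymbol{u}'\ \text{beats}\ \boldsymbol{0}]\le(P(\boldsymbol{y}\mid\boldsymbol{x}')/P(\boldsymbol{y}\mid\boldsymbol{0}))^{1/(1+s)}$, the concavity of $t\mapsto t^{s}$ with Jensen's inequality, and the fact that $P(\boldsymbol{y}\mid\boldsymbol{0})$ is $\mathbf{P}$-free, the averaged bound factorizes across coordinates. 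The systematic coordinates contribute a ``hard'' per-letter Gallager factor and each parity coordinate contributes the ``soft'' factor $1-\rho_w+\rho_w(P(y\mid1)/P(y\mid0))^{1/(1+s)}$. For the heavy competitors, $\rho_w\approx\tfrac12$ makes the parity factor coincide with the symmetric uniform-input Gallager kernel, so this part of the sum inherits the random-coding exponent $E_r(R)>0$ valid for all $R<C$; the light competitors are few and, through the weight $w$ and the exponential decay $(1-\rho_w(1-\gamma))^{n-k}$ of the parity part, contribute only $o(k)$.

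Finally I would combine the two regimes, optimize over $s$, and conclude that $\textbf{E}[W_H(\hat{\boldsymbol{U}})]=o(k)$, hence ${\rm BER}\to0$, for every $R=k/n<C$. The main obstacle is precisely the heavy-competitor bookkeeping in the tilted bound: one must verify that replacing $\rho_w$ by $\tfrac12$ on the dominant large-$w$ terms costs only a negligible factor, and that the systematic ``hard'' coordinates do not erode the exponent, so that the biased ensemble recovers the full capacity rather than merely the cutoff rate. This analysis also clarifies the contrast with FER: the same error events that keep the frame-error probability bounded away from zero at high rates flip only $o(k)$ information bits, because light information vectors are confused only with light competitors.
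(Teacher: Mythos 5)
Your opening moves are sound and match the paper's: the all-zero reduction, the observation that the weight-$w$ competitor's parity part is i.i.d.\ $\mathrm{Bernoulli}(\rho_w)$ with $\rho_w=\tfrac12(1-(1-2\rho)^w)$ (this is exactly the paper's Lemma~1), and the union-bound disposal of the light competitors all work. The genuine gap is in the heavy-competitor step. Your central inequality
\begin{equation*}
\textbf{E}[W_H(\hat{\boldsymbol{U}})]\leqslant\sum_{\boldsymbol{u}'\neq\boldsymbol{0}}W_H(\boldsymbol{u}')\,\Pr\{\boldsymbol{u}'\ \text{beats}\ \boldsymbol{0}\}
\end{equation*}
averages over the channel output $\boldsymbol{y}$ \emph{separately in each pairwise term}, and once that is done no choice of the tilting parameter $s$ can carry you past the cutoff rate. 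Concretely, bounding each unconditional term by $\textbf{E}_{\boldsymbol{y},\mathbf{P}}\bigl[(P(\boldsymbol{y}|\boldsymbol{u}'\mathbf{G})/P(\boldsymbol{y}|\boldsymbol{0}))^{1/(1+s)}\bigr]$ and summing over the roughly $2^{k}$ heavy competitors (where $\rho_w\approx\tfrac12$) gives per parity coordinate the factor $\tfrac12\bigl[1+g(s)\bigr]$ with $g(s)=\sum_{y}P(y|0)^{s/(1+s)}P(y|1)^{1/(1+s)}$, and the whole sum behaves like $2^{\,n[\log(1+g(s))-(1-R)]}$, which vanishes only for $R<1-\log(1+g(s))\leqslant 1-\log(1+z)=R_0<C$ (for BIOS channels $g(s)$ is minimized at the Bhattacharyya point $s=1$). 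The capacity-achieving kernel $\sum_{y}\bigl[\tfrac12 P(y|0)^{1/(1+\gamma)}+\tfrac12 P(y|1)^{1/(1+\gamma)}\bigr]^{1+\gamma}$ has the power sitting \emph{inside} the sum over $y$; your structure inevitably produces it outside, and Jensen ($\textbf{E}[X^{\gamma}]\leqslant(\textbf{E}[X])^{\gamma}$) runs in the wrong direction to repair this: above $R_0$ your quantity $\textbf{E}_{\boldsymbol{y}}[X]$ is exponentially large, while the quantity you need, $\textbf{E}_{\boldsymbol{y}}[\min(1,X)]\leqslant\textbf{E}_{\boldsymbol{y}}[X^{\gamma}]$, is exponentially small and cannot be recovered from it. The fix is not a cleverer per-term bound but a re-ordering: the union over competitors must stay conditional on $\boldsymbol{y}$, be truncated via $\min(1,x)\leqslant x^{\gamma}$, and only then be averaged over $\boldsymbol{y}$.

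This re-ordering is precisely how the paper structures its proof, and it also resolves a second defect of your decomposition: the weighted indicator sum $\sum_{\boldsymbol{u}'}W_H(\boldsymbol{u}')\mathbb{1}[\cdot]$ is not a probability (it is not bounded by $1$), so the truncation trick is not even applicable to it without an extra device. The paper instead fixes a threshold $T$ and writes $\mathrm{BER}|_{\boldsymbol{y}}\leqslant T/k+\Pr\{\bigcup_{W_H(\boldsymbol{u})\geqslant T}\{\boldsymbol{u}\ \text{beats}\ \boldsymbol{0}\}\mid\boldsymbol{y}\}$: light decoder outputs cost at most $T/k\rightarrow 0$, and the heavy part is a bona fide conditional probability, which is raised to the power $\gamma$, tilted term by term, and summed. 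Lemma~1 is then used pointwise, $P_G(\boldsymbol{x}_{k}^{n-1}|\boldsymbol{u})\leqslant(1-\rho_T)^{n-k}$ for all $W_H(\boldsymbol{u})\geqslant T$, so the heavy sum extends over all of $\mathbb{F}_2^{n}$ and factorizes; averaging over $\boldsymbol{y}$ with the $\gamma$-power still inside yields $\mathrm{BER}\leqslant T/(nR)+2^{-n[E_0(\gamma)-\gamma\tilde{R}(T)]}$, where $\tilde{R}(T)=(1-R)\log(1+(1-2\rho)^T)+R\rightarrow R$ as $T\rightarrow\infty$. Your closing remark that ``replacing $\rho_w$ by $\tfrac12$ must cost only a negligible factor'' is exactly the role played by $\tilde{R}(T)$, so your instinct about the bookkeeping is right; but without moving the $\gamma$-power inside the average over $\boldsymbol{y}$, your plan proves the theorem only for rates below the cutoff rate, not up to capacity.
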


The significance of the coding theorem proved in this paper includes the following aspects.
\begin{itemize}
  \item We give a direct proof for the coding theorem for the systematic code ensemble. To the best of our knowledge, no direct proof is available in the literature for the systematic code ensemble. The diffculty lies in the fact that the systematic generator matrices have the unity matrix as a non-random part. It is of interest to develop a direct proof, which may reveal more mechanism of good codes.
  \item Different from the coding theorem in~\cite{Elias1955Coding}~\cite{Gallager68} for the code ensemble $\mathscr{C}_{g}[n,k]$, which typically has no efficient decoding algorithm, we focus on the systematic code ensemble $\mathscr{C}_{s}[n,k]$ with $\rho \leqslant 1/2$. The simulation results show that, with $\rho \ll 1/2$, the LDGM code ensemble can be efficiently decoded by the iterative BP algorithm.
  \item Generally, the FER of the LDGM code ensemble is relative high because of the light codewords introduced by the sparse generator matrix. While the proof technique developed in this paper shows that systematic LDGM code ensemble is capacity-achieving in terms of BER, suggesting that the BER can be lowered down by assigning light codeword vectors to light information vectors.
\end{itemize}

\subsection{The Proof of Achievability }
Because of the linearity of the code, we assume that the all zero codeword $\boldsymbol{0}\in\mathbb{F}_2^n$ is transmitted over a BIOS memoryless channel, resulting in a received sequence $\boldsymbol{y}\in\mathcal{Y}^n$.
The maximum likelihood decoder selects $\boldsymbol{u}$ such that $P(\boldsymbol{y}|\boldsymbol{x})$ is maximized, where $\boldsymbol{x}$ is the codeword corresponding to $\boldsymbol{u}$.

To prove Theorem~\ref{THEO_CAPACITY}, we need the following two lemmas.

\begin{lemma}\label{LEM_T}
Over the systematic code ensemble $\mathscr{C}_{s}[n,k]$ defined by $\rho\leqslant 1/2$, the parity-check vector corresponding to an information vector with weight $w$ is a Bernoulli sequence with success probability \begin{equation}
\rho_{w}\triangleq \mathrm{Pr}\{X_j=1|W_H(\boldsymbol{U})=w\}=\frac{1-(1-2\rho)^w}{2}.
\end{equation}
Furthermore, for any given positive integer $T\leqslant k$,
\begin{equation}
P_G(\boldsymbol{x}_{k}^{n-1}|\boldsymbol{u})\triangleq\mathrm{Pr}\{\boldsymbol{X}_{k}^{n-1}=\boldsymbol{x}_{k}^{n-1}|\boldsymbol{U}=\boldsymbol{u}\} \leqslant P(\boldsymbol{0}^{n-k}|\boldsymbol{u})\leqslant (1-\rho_{T})^{n-k},
\end{equation}
for all $\boldsymbol{u} \in \mathbb{F}_{2}^{k}$ with $W_{H}(\boldsymbol{u})\geqslant T$ and $\boldsymbol{x}_{k}^{n-1} \in \mathbb{F}_{2}^{n-k}$.
\end{lemma}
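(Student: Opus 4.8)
The plan is to establish the two claims in sequence. First I would compute the distribution of a single parity bit. Fix an information vector $\boldsymbol{u}$ with $W_H(\boldsymbol{u})=w$ and consider the $j$-th parity bit $X_j = \sum_{i: u_i=1} P_{i,j}$, which is the modulo-$2$ sum of exactly $w$ independent Bernoulli$(\rho)$ random variables. The success probability of such a sum is the classic "biased coin XOR" quantity, and the cleanest way to obtain it is to pass to the $\pm 1$ domain: writing $(-1)^{X_j} = \prod_{i: u_i=1}(-1)^{P_{i,j}}$ and using $\mathbf{E}[(-1)^{P_{i,j}}] = 1-2\rho$, independence gives $\mathbf{E}[(-1)^{X_j}] = (1-2\rho)^w$. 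Converting back via $\Pr\{X_j=1\} = \tfrac{1-\mathbf{E}[(-1)^{X_j}]}{2}$ yields exactly $\rho_w = \tfrac{1-(1-2\rho)^w}{2}$, proving the first assertion. Since the columns of $\mathbf{P}$ are independent and $\boldsymbol{x}_k^{n-1}$ depends only on the entries in the rows indexed by the support of $\boldsymbol{u}$, the parity bits $X_k,\dots,X_{n-1}$ are mutually independent, so the parity-check vector is indeed a Bernoulli$(\rho_w)$ sequence.

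For the chain of inequalities, I would first verify the monotonicity I need: because $0<\rho\leqslant 1/2$ we have $0\leqslant 1-2\rho<1$, so $(1-2\rho)^w$ is nonincreasing in $w$, hence $\rho_w$ is nondecreasing in $w$ and $\rho_w\leqslant 1/2$ for all $w$. The middle inequality $P_G(\boldsymbol{x}_k^{n-1}\mid\boldsymbol{u})\leqslant P(\boldsymbol{0}^{n-k}\mid\boldsymbol{u})$ follows because each coordinate is an independent Bernoulli$(\rho_w)$ with $\rho_w\leqslant 1/2$, so the all-zero outcome is the single most likely value: for each $j$, $\Pr\{X_j=0\}=1-\rho_w\geqslant\rho_w=\Pr\{X_j=1\}$, and by independence the product is maximized by taking every coordinate to be $0$. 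The final inequality then reads $P(\boldsymbol{0}^{n-k}\mid\boldsymbol{u}) = (1-\rho_w)^{n-k}\leqslant (1-\rho_T)^{n-k}$, which holds precisely when $\rho_w\geqslant\rho_T$; this is guaranteed by the monotonicity of $\rho_w$ together with the hypothesis $W_H(\boldsymbol{u})=w\geqslant T$.

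I do not anticipate a genuine obstacle here, as the lemma is essentially a bookkeeping result about biased parity sums; the two points requiring care are justifying the independence of the parity coordinates (which rests on the column-independence of $\mathbf{P}$ and the fact that $\rho_w$ depends only on $w$, not on which rows are active) and stating the monotonicity direction correctly, since it is $\rho_w$ that increases in $w$ while the survival probability $(1-\rho_w)^{n-k}$ decreases. The mild subtlety worth flagging is that $\rho_T$ is being used as a uniform upper bound on $1-\rho_w$ over the whole regime $w\geqslant T$; this is exactly why the lemma is stated with a threshold $T$, so that later in the proof of Theorem~\ref{THEO_CAPACITY} the heavy information vectors (those with $W_H(\boldsymbol{u})\geqslant T$) can be controlled uniformly by the single quantity $(1-\rho_T)^{n-k}$.
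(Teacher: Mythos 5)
Your proposal is correct and follows essentially the same route as the paper: both establish the closed form $\rho_w=[1-(1-2\rho)^w]/2$, then use the monotonicity $\rho\leqslant\rho_w\leqslant\rho_{w+1}\leqslant 1/2$ to get the chain of inequalities (all-zero outcome most likely, then $(1-\rho_w)^{n-k}\leqslant(1-\rho_T)^{n-k}$ for $w\geqslant T$). The only difference is cosmetic---you compute $\rho_w$ via the $\pm 1$ expectation (piling-up) trick while the paper uses the recursion $\rho_{w+1}=\rho(1-\rho_w)+\rho_w(1-\rho)$ and induction---and your write-up is in fact slightly more complete, since you explicitly justify the mutual independence of the parity coordinates, which the paper leaves implicit.
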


\begin{proof}
By definition, the parity-check vector corresponding to an information vector $\boldsymbol{u}$ with weight $w \geqslant 1$ is $\boldsymbol{u}\mathbf{P}$. Since the elements of $\mathbf{P}$ are independent, identically distributed binary random variables, the success probability can be calculated recursively by $\rho_{1}=\rho$
and $\rho_{w+1} = \rho(1-\rho_w) + \rho_w(1-\rho)$. By induction, we can prove that
$\rho_{w}=[1-(1-2\rho)^w]/2$. Noticing that $\rho \leqslant \rho_{w} \leqslant \rho_{w+1}\leqslant 1/2$,  we have
$P(\boldsymbol{x}_{k}^{n-1}|\boldsymbol{u})\leqslant P(\boldsymbol{0}^{n-k}|\boldsymbol{u})\leqslant (1-\rho_{T})^{n-k} $ for all $\boldsymbol{u} \in \mathbb{F}_{2}^{k}$ with $W_{H}(u)\geqslant T$ and $\boldsymbol{x}_{k}^{n-1} \in \mathbb{F}_{2}^{n-k}$.
\end{proof}

\begin{lemma}\label{LEM_E0}
For a BIOS channel, the error exponent defined in~\cite[Theorem 5.6.2]{Gallager68} can be reduced as
\begin{equation}
E_r(R)=\max_{0\leqslant \gamma\leqslant 1}[E_0(\gamma)-\gamma R],\label{eq:Er}
\end{equation}
where
\begin{equation}
E_{0}(\gamma)=
-\log
\sum_{y\in\mathcal{Y}}P(y|0)^{1/(1+\gamma)}\left(\frac1{2}P(y|0)^{1/(1+\gamma)}+\frac1{2}P(y|1)^{1/(1+\gamma)}\right)^{\gamma}.
\label{eq:E0}
\end{equation}
Therefore, $E_r(R)>0$ for $R<I(1/2)$, where $I(1/2)$ is the BIOS channel capacity.
\end{lemma}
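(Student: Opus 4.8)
The plan is to specialize Gallager's random-coding exponent (Theorem~5.6.2 of~\cite{Gallager68}) to the BIOS channel with the uniform input distribution $Q=(1/2,1/2)$, which is the distribution used to generate the codeword bits and which is well known to be the $E_0$-maximizing input for symmetric channels, and then to carry out two tasks: (i) rewrite the resulting $E_0$ into the claimed symmetric form using the channel symmetry, and (ii) establish positivity for $R<I(1/2)$. Recall that Gallager's exponent for an i.i.d.\ ensemble with letter distribution $Q$ reads $E_r(R)=\max_{0\leqslant\gamma\leqslant1}[E_0(\gamma,Q)-\gamma R]$ with $E_0(\gamma,Q)=-\log\sum_{y}\bigl(\sum_{x}Q(x)P(y|x)^{1/(1+\gamma)}\bigr)^{1+\gamma}$. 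Setting $Q(0)=Q(1)=1/2$ gives $-\log\sum_{y}\bigl(\tfrac12 P(y|0)^{1/(1+\gamma)}+\tfrac12 P(y|1)^{1/(1+\gamma)}\bigr)^{1+\gamma}$, so it remains to show that this equals the expression in~\eqref{eq:E0}.

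For the rewriting, I would abbreviate $a_y=P(y|0)^{1/(1+\gamma)}$ and $b_y=P(y|1)^{1/(1+\gamma)}$ and split one factor off the $(1+\gamma)$-th power,
\[
\bigl(\tfrac12 a_y+\tfrac12 b_y\bigr)^{1+\gamma}=\tfrac12 a_y\bigl(\tfrac12 a_y+\tfrac12 b_y\bigr)^{\gamma}+\tfrac12 b_y\bigl(\tfrac12 a_y+\tfrac12 b_y\bigr)^{\gamma}.
\]
The symmetry condition $P(y|1)=P(\pi(y)|0)$ together with $\pi(\pi(y))=y$ gives $b_y=a_{\pi(y)}$ and $a_y=b_{\pi(y)}$, so the involution $y\mapsto\pi(y)$ interchanges $a_y$ and $b_y$ while leaving the bracket $\tfrac12 a_y+\tfrac12 b_y$ invariant. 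Since $\pi$ is a measure-preserving bijection of $\mathcal{Y}$, summing (integrating) the second term after the change of variables $y\mapsto\pi(y)$ reproduces the first term; hence the two contributions are equal, their sum equals $\sum_{y}a_y\bigl(\tfrac12 a_y+\tfrac12 b_y\bigr)^{\gamma}$, and this is exactly the argument of the logarithm in~\eqref{eq:E0}. This symmetry-driven pairing is the crux of the reduction, and the point demanding the most care is justifying the change of variables for continuous output alphabets, where the sums become integrals and one must invoke that $\pi$ preserves the output measure.

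Finally, for positivity I would appeal to the standard properties of $E_0(\gamma)$ established in~\cite[Theorem~5.6.3]{Gallager68}: $E_0(0)=0$ (because with uniform input $\sum_y(\tfrac12 P(y|0)+\tfrac12 P(y|1))=\sum_y P_Y(y)=1$), the function $E_0(\gamma)$ is nonnegative, nondecreasing and concave in $\gamma$ on $[0,1]$, and $\partial E_0/\partial\gamma|_{\gamma=0}=I(1/2)$. Writing $g(\gamma)=E_0(\gamma)-\gamma R$, we have $g(0)=0$ and $g'(0)=I(1/2)-R>0$ whenever $R<I(1/2)$; hence $g(\gamma)>0$ for all sufficiently small $\gamma>0$, and therefore $E_r(R)=\max_{0\leqslant\gamma\leqslant1}g(\gamma)>0$, which completes the argument.
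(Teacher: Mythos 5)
Your proposal is correct and follows essentially the same route as the paper: both specialize Gallager's random-coding exponent to the uniform input distribution (invoking, without proof, that it is the maximizer for symmetric channels) and use the output involution $\pi$ (equivalently, interchanging the labels $0$ and $1$) to identify Gallager's symmetric form of $E_0$ with the form in~(\ref{eq:E0}). The only differences are cosmetic: you read that identity in the reverse direction (splitting one factor off the $(1+\gamma)$-th power and pairing terms under $y\mapsto\pi(y)$, rather than averaging the two label-swapped expressions), and you make the positivity argument via $E_0(0)=0$ and $\partial E_0/\partial\gamma|_{\gamma=0}=I(1/2)$ explicit where the paper leaves it to Gallager's standard results.
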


\begin{proof}
By symmetry, we see that the value of $E_{0}(\gamma)$ given in~(\ref{eq:E0}) remains unchanged if we interchange the labels of $0$ and $1$. That is,
\begin{equation}
E_{0}(\gamma)=
-\log
\sum_{y\in\mathcal{Y}}P(y|1)^{1/(1+\gamma)}\left(\frac1{2}P(y|0)^{1/(1+\gamma)}+\frac1{2}P(y|1)^{1/(1+\gamma)}\right)^{\gamma}.
\label{eq:E0'}
\end{equation}
Combining (\ref{eq:E0}) and (\ref{eq:E0'}), we have
\begin{align}
\notag
E_{0}(\gamma)&=
-\log
\sum_{y\in\mathcal{Y}}
\left(\frac1{2}P(y|0)^{1/(1+\gamma)}+\frac1{2}P(y|1)^{1/(1+\gamma)}\right)
\left(\frac1{2}P(y|0)^{1/(1+\gamma)}+\frac1{2}P(y|1)^{1/(1+\gamma)}\right)^{\gamma}\\
&=
-\log\sum_{y\in\mathcal{Y}}\left[\frac1{2}P(y|0)^{1/(1+\gamma)}+\frac1{2}P(y|1)^{1/(1+\gamma)}\right]^{1+\gamma}.
\end{align}
For BIOS memoryless channels, the random coding error exponent defined as~(5.6.16) in \cite[Theorem 5.6.2]{Gallager68} is maximized when $P_X(0)=P_X(1)=1/2$ and hence is reduced exactly the same as $E_r(R)$ given by~(\ref{eq:Er}).
\end{proof}

\begin{proofthm1}
From the law of total expectation, it follows that
\begin{align}
\notag \mathrm{BER}&=\sum_{\boldsymbol{y}\in\mathcal{Y}^n}P(\boldsymbol{y}|\bm{0})\cdot\mathrm{BER}|_{\boldsymbol{y}}.
\end{align}
As an upper bound of the  BER under the MAP decoding, we consider the BER under the ML decoding for the proof.
Given the received vector $\boldsymbol{y}$, the decoding output $\hat{\boldsymbol{U}}$ is a random vector over the code ensemble due to the randomness of the parity checks. Let $T\leqslant k$ be a positive integer. The event of decoding error can be split into two sub-events depending on whether or not $W_{H}(\hat{\boldsymbol{U}})\geqslant T$. Hence, the conditional BER can be upper bounded by
\begin{align}
\notag \mathrm{BER}|_{\boldsymbol{y}}&=\frac{\textbf{E}[W_H(\hat{\boldsymbol{U}})|\boldsymbol{y}]}{k}\\
\notag &= \sum_{\boldsymbol{u}}\mathrm{Pr}\{\boldsymbol{u}~\mathrm{is~the~most~likely}|\boldsymbol{y}\}\frac{W_H(\boldsymbol{u})}{k}\\
&\leqslant \frac{T}{k}+
\Big(\sum_{\boldsymbol{u}:W_H(\boldsymbol{u})\geqslant T}\mathrm{Pr}\{P(\boldsymbol{y}|\boldsymbol{u}\mathbf{G})\geqslant P(\boldsymbol{y}|\boldsymbol{0})\}\Big)^{\gamma},\textrm{~for~any~}0\leqslant \gamma \leqslant 1.\label{eq:condBER}
\end{align}
From the Markov inequality, for any given $s>0$, the probability of a vector $\boldsymbol{u}$  with $W_{H}(\boldsymbol{u})\geqslant T$ being more likely than $\boldsymbol{0}$ can be upper bounded by\\
\begin{align}
\notag
\mathrm{Pr}\{P({\boldsymbol{y}|\boldsymbol{u}\mathbf{G}}) \geqslant P({\boldsymbol{y}|\boldsymbol{0}})\}
\notag&\leqslant \frac{\textbf{E}[P^s({\boldsymbol{y}|\boldsymbol{u}\mathbf{G}})]}{P^s({\boldsymbol{y}|\boldsymbol{0}})}\\
\notag&=
\sum_{\boldsymbol{x}_{k}^{n-1}\in{\mathbb{F}_2^{n-k}}}
P_G(\boldsymbol{x}_{k}^{n-1}|\boldsymbol{u})
\frac{P^{s}(\boldsymbol{y}_{0}^{k-1}|\boldsymbol{u})P^{s}(\boldsymbol{y}_{k}^{n-1}|\boldsymbol{x}_{k}^{n-1})}{P^{s}(\boldsymbol{y}|\boldsymbol{0})}\\
&\notag\leqslant
\sum_{\boldsymbol{x}_{k}^{n-1}\in{\mathbb{F}_2^{n-k}}}
(1-\rho_{T})^{n-k}
\frac{P^{s}(\boldsymbol{y}_{0}^{k-1}|\boldsymbol{u})P^{s}(\boldsymbol{y}_{k}^{n-1}|\boldsymbol{x}_{k}^{n-1})}{P^{s}(\boldsymbol{y}|\boldsymbol{0})}\\
&=\left[\frac{1+(1-2\rho)^T}{2}\right]^{n-k}\frac{P^{s}(\boldsymbol{y}_{0}^{k-1}|\boldsymbol{u})}{P^{s}(\boldsymbol{y}_{0}^{k-1}|\boldsymbol{0}^k)}\sum_{\boldsymbol{x}_{k}^{n-1}\in{\mathbb{F}_2^{n-k}}}
\frac{P^{s}(\boldsymbol{y}_{k}^{n-1}|\boldsymbol{x}_{k}^{n-1})}{P^{s}(\boldsymbol{y}_{k}^{n-1}|\boldsymbol{0}^{n-k})},
\end{align}
where the second inequality follows from Lemma~\ref{LEM_T}.
Thus, we have
\begin{align}
\notag
&\sum_{{\boldsymbol{u}}:W_{H}({\boldsymbol{u}})\geqslant T}
{\rm Pr}\{P({\boldsymbol{y}|\boldsymbol{u}\mathbf{G}}) \geqslant P({\boldsymbol{y}|\boldsymbol{0}})\}\\
\notag\leqslant&
\sum_{{\boldsymbol{u}}:W_{H}({\boldsymbol{u}})\geqslant T}\left[\frac{1+(1-2\rho)^T}{2}\right]^{n-k}
\frac{P^{s}(\boldsymbol{y}_{0}^{k-1}|\boldsymbol{u})}{P^{s}(\boldsymbol{y}_{0}^{k-1}|\boldsymbol{0}^k)}
\sum_{\boldsymbol{x}_{k}^{n-1}\in{\mathbb{F}_2^{n-k}}}
\frac{P^{s}(\boldsymbol{y}_{k}^{n-1}|\boldsymbol{x}_{k}^{n-1})}{P^{s}(\boldsymbol{y}_{k}^{n-1}|\boldsymbol{0}^{n-k})}\\
\notag\leqslant&
\left[\frac{1+(1-2\rho)^T}{2}\right]^{n-k}\sum_{\boldsymbol{u}\in
	{\mathbb{F}_2^{k}}}
\frac{P^{s}(\boldsymbol{y}_{0}^{k-1}|\boldsymbol{u})}{P^{s}(\boldsymbol{y}_{0}^{k-1}|\boldsymbol{0}^k)}
\sum_{\boldsymbol{x}_{k}^{n-1}\in{\mathbb{F}_2^{n-k}}}
\frac{P^{s}(\boldsymbol{y}_{k}^{n-1}|\boldsymbol{x}_{k}^{n-1})}{P^{s}(\boldsymbol{y}_{k}^{n-1}|\boldsymbol{0}^{n-k})}\\
=&
\left[\frac{1+(1-2\rho)^T}{2}\right]^{n-k}\sum_{\boldsymbol{x} \in {\mathbb{F}_2^{n}}}\frac{P^{s}(\boldsymbol{y}_{0}^{n-1}|
	\boldsymbol{x})}{P^{s}(\boldsymbol{y}_{0}^{n-1}|\boldsymbol{0})}.
\end{align}
Substituting this bound into~(\ref{eq:condBER}), we have
\begin{align}
\notag
\mathrm{BER} &=
\sum_{\boldsymbol{y} \in {\mathcal{Y}}^n}P(\boldsymbol{y}|\bm{0})\cdot\mathrm{BER}|_{\boldsymbol{y}}\\
\notag&\leqslant
\sum_{\boldsymbol{y} \in {\mathcal{Y}}^n}
P(\boldsymbol{y}|\boldsymbol{0})\cdot
\left\{\frac{T}{k}+\left(\left[\frac{1+(1-2\rho)^T}{2}\right]^{n-k}\sum_{\boldsymbol{x} \in {\mathbb{F}_2^{n}}}\frac{P^{s}(\boldsymbol{y}_{0}^{n-1}|
	\boldsymbol{x})}{P^{s}(\boldsymbol{y}_{0}^{n-1}|\boldsymbol{0})}\right)^{\gamma}\right\}\\
\notag&\stackrel{(*)}=
\frac{T}{k}+
\left[1+(1-2\rho)^T\right]^{(n-k)\gamma}\cdot 2^{k\gamma}
\prod_{i=0}^{n-1}\left[\sum_{y_{i}\in \mathcal{Y}}P(y_{i}|0)
\left(\sum_{x_{i} \in \mathbb{F}_2}\frac1{2}\cdot\frac{P^{s}(y_{i}| x_{i})}{P^{s}(y_{i}|0)}\right)^{\gamma}\right]\\
\notag&=
\frac{T}{k}+
\left[1+(1-2\rho)^T\right]^{(n-k)\gamma}\cdot 2^{k\gamma}
\prod_{i=0}^{n-1}\left[\sum_{y_{i}\in \mathcal{Y}}P(y_{i}|0)^{1-\gamma s}
\left(\sum_{x_{i} \in \mathbb{F}_2}\frac1{2}P^{s}(y_{i}| x_{i})\right)^{\gamma}\right]\\
\notag&=
\frac{T}{k}+
2^{n\left\{\gamma\left[(1-R) \log(1+(1-2\rho)^T) + R\right]+\log\left[\sum_{y\in \mathcal{Y}}P(y|0)^{1-\gamma s}
	\left(\sum_{x\in \mathbb{F}_2}\frac1{2}P^{s}(y| x)\right)^{\gamma}\right]\right\}}\\
&\stackrel{(**)}=
\frac{T}{nR}+
2^{-n\left[E_0(\gamma)-\gamma\tilde{R}(T)\right]},\label{eq:BER_T}
\end{align}
where the equality $(*)$ follows from the memoryless channel assumption and the equality $(**)$ follows by setting $s = \frac1{1+\gamma}$ and denoting
\begin{equation}
\tilde{R}(T)\triangleq(1-R) \log(1+(1-2\rho)^T) + R.
\end{equation}
From~(\ref{eq:BER_T}), we see that the derived bound of BER is valid for any given positive integer $T\leqslant k$ and any $0\leqslant \gamma \leqslant 1$.
Then we can optimize  the bound as
\begin{equation}
\mathrm{BER} \leqslant \min_{T}\left\{\frac{T}{nR}+2^{-nE_r(\tilde{R}(T))}\right\}.\label{eq:BER_minT}
\end{equation}
Note that $\tilde{R}(T)$ converges to $R$ as $T\rightarrow \infty$.
Thus, for any given positive $\rho\leqslant 1/2$ and $0<R<I(1/2)$, there exists some $T_0>0$ such that $\tilde{R}(T_0)<I(1/2)$.
From Lemma~\ref{LEM_E0}, it follows that $E_r(\tilde{R}(T_0))>0$.
Hence, we see that both terms of the bound in~(\ref{eq:BER_minT}) converge to $0$ for sufficiently large $n$.

\end{proofthm1}

\section{Finite Length Performance of Systematic LDGM Block Codes}\label{SEC_3}
In Section~\ref{SEC_2}, we have proved the coding theorem by analyzing the BER performance of the LDGM code ensemble with infinite coding length. In practice, we are interested in the finite length performance of the LDGM code ensemble.
An upper bound and a lower bound on BER for a systematic linear code have been proposed in~\cite{Ma2017Systematic} for analyzing the performance of the systematic block Markov superposition transmission of repetition codes~(BMST-R) over AWGN channels.
In contrast to the BMST-R codes, which can also be viewed as a class of LDGM codes, the code ensemble defined in this paper~(Definition~\ref{DEF_LDGM_BLOCK_EN}) has an easily computable weight distribution as shown below.
We will also generalize these  bounds to the BIOS channels.

\subsection{Weight Distribution}
The input-redundancy weight enumerating function~(IRWEF) of a systematic block code is defined as~\cite{Benedetto1996IRWEF}
\begin{equation}
A(X,Y) = \sum_{i,j}A_{ij}X^iY^j,
\end{equation}
where $X, Y$ are two dummy variables and
$A_{ij}$ denotes the number of codewords having input~(information bits) weight $i$ and redundancy~(parity-check bits) weight $j$. For the systematic LDGM code ensemble, we have
\begin{equation}
A(X,Y) = 1 + \sum_{i=1}^{k}\binom{k}{i}X^i\left(1-\rho_{i}+\rho_{i}Y\right)^{n-k},
\end{equation}
where $\rho_{i}=[1-(1-2\rho)^i]/2$ as given in Lemma~\ref{LEM_T}.
This implies that the coefficients of the ensemble IRWEF can be given by
\begin{equation}
A_{ij} = \binom{k}{i}\binom{n-k}{j}\rho_i^j(1-\rho_i)^{n-k-j},
\end{equation}
for $1\leqslant i \leqslant k$ and $0\leqslant j \leqslant n-k$.

\subsection{Performance Bounds}
Suppose that the all zero codeword $\bm 0 \in \mathbb{F}_2^{n}$ is transmitted over a BIOS memoryless channel, resulting in a received sequence $\bm y \in \mathcal{Y}^n$.
Given a non-zero codeword $\bm c\in \mathbb{F}_2^{n}$, the pairwise error probability is defined conventionally as the probability that $\bm c$ is not less likely than $\bm 0$, which depends only  on the Hamming weight $W_H(\bm{c})$. Hence, we can denote $\Pr\{P(\bm y|\bm{c})\geqslant P(\bm y|\bm 0)\}$ as ${\rm PEP}(d)$, a function of $d=W_H(\bm{c})$.

\begin{theorem}\label{Bound}
For a BIOS memoryless channel, the BER of the systematic LDGM code ensemble $\mathscr{C}[n,k]$ under MAP decoding is upper bounded by
\begin{align}
\notag
{\rm BER_{MAP}}\leqslant \min_{0\leqslant r^*\leqslant k}\left\{\sum_{i=1}^{2r^*}\right.
&\frac{i}{k}\left( \sum_{j=0}^{n-k} A_{ij}{\rm PEP}(i+j) \right) \\
&\left.+\sum_{i=r^*+1}^{k}\frac{\min\{i+r^*,k\}}{k}\binom{k}{i}{\rm PEP}(1)^i(1-{\rm PEP}(1))^{k-i}\right\},
\end{align}
and lower bounded by
\begin{equation}
{\rm BER_{MAP}}\geqslant \sum_{w=0}^{n-k}P_W(w+1){\rm PEP}(w+1),
\end{equation}
where $P_W(w+1)$ is the probability that a row of the generator matrix has Hamming weight $w+1$, given by
\begin{equation}
P_W(w+1)=\binom{n-k}{w}\rho^w\left(1-\rho\right)^{n-k-w}.
\end{equation}
\end{theorem}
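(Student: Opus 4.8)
The plan is to prove the two bounds separately: the lower bound by a genie-aided per-bit argument, and the upper bound by analysing a family of suboptimal list decoders indexed by $r^*$, exploiting that the bit-wise MAP decoder minimises the BER.

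For the lower bound I would fix an information position $i$ and hand a genie the values of all other information bits $U_j$, $j\ne i$. Since $\bm 0$ is transmitted these are all zero, and the optimal decision on $U_i$ collapses to a binary hypothesis test between $\bm 0$ and the $i$-th row $\mathbf g_i=(\mathbf e_i,\mathbf p_i)$ of $\mathbf G$. Side information cannot increase the error probability, so $\Pr\{E_i\}$ is at least the error of this genie test, which for equiprobable hypotheses equals the pairwise error probability ${\rm PEP}(W_H(\mathbf g_i))$. Because $W_H(\mathbf g_i)=1+W_H(\mathbf p_i)$ and $\mathbf p_i$ is a Bernoulli$(\rho)$ row of length $n-k$, averaging over the ensemble gives $\textbf{E}[{\rm PEP}(W_H(\mathbf g_i))]=\sum_{w=0}^{n-k}P_W(w+1){\rm PEP}(w+1)$; summing over $i$ and dividing by $k$ yields the claim. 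The only delicate point is the tie event $P(\bm y|\mathbf g_i)=P(\bm y|\bm 0)$, which vanishes for continuous-output channels and is otherwise absorbed into the convention $\geqslant$ used in defining ${\rm PEP}$.

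For the upper bound I would \emph{not} analyse the MAP decoder directly. Since the bit-MAP decoder minimises the BER, for every integer $r^*$ its BER is upper bounded by that of the suboptimal decoder $D_{r^*}$ that first forms the systematic hard-decision vector $\bm z\in\mathbb F_2^k$, with $z_i=1$ exactly on the $N$ positions where $P(y_i|1)\geqslant P(y_i|0)$, and then outputs the most likely codeword whose information part lies in the Hamming ball of radius $r^*$ about $\bm z$. Crucially $N$ depends only on the channel and not on $\mathbf P$, so $N\sim\mathrm{Binomial}(k,{\rm PEP}(1))$ for every code in the ensemble. I split $\textbf{E}[W_H(\hat{\bm U})]$ according to whether $N\leqslant r^*$ or $N>r^*$. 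On $\{N\leqslant r^*\}$ the ball contains $\bm 0$, so any nonzero output $\bm u$ must satisfy $P(\bm y|\bm u\mathbf G)\geqslant P(\bm y|\bm 0)$ and, by the triangle inequality, obeys $W_H(\bm u)\leqslant N+r^*\leqslant 2r^*$; a union bound over codewords of input weight $w\leqslant 2r^*$ together with the ensemble IRWEF $A_{wj}$ then yields the first term $\sum_{w=1}^{2r^*}\tfrac wk\sum_j A_{wj}{\rm PEP}(w+j)$. On $\{N>r^*\}$ the all-zero codeword may fall outside the ball, but the triangle inequality still forces $W_H(\hat{\bm U})\leqslant d_H(\hat{\bm U},\bm z)+W_H(\bm z)\leqslant \min\{N+r^*,k\}$ deterministically; taking expectation over the binomial law of $N$ produces the second term $\sum_{i=r^*+1}^{k}\tfrac{\min\{i+r^*,k\}}{k}\binom ki {\rm PEP}(1)^i(1-{\rm PEP}(1))^{k-i}$. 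Adding the two contributions and minimising over $0\leqslant r^*\leqslant k$ gives the stated bound, and the high-SNR tightness against the lower bound follows because for large $r^*$ the first term becomes the full union bound whose leading weight-one contribution $(1-\rho)^{n-k}{\rm PEP}(1)$ coincides with the $w=0$ term of the lower bound.

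The hard part will be the upper bound, specifically recognising that one should bound the BER through the auxiliary ball-restricted decoder $D_{r^*}$ rather than through MAP/ML directly, and then establishing the clean deterministic inequality $W_H(\hat{\bm U})\leqslant\min\{N+r^*,k\}$ that ties the information-error weight to the number $N$ of corrupted systematic positions. This is precisely where the \emph{systematic} structure of $\mathbf G=[\mathbf I~\mathbf P]$ is used, and it is what lets the binomial tail in $N$ replace the portion of the union bound that would otherwise diverge in the waterfall region. By contrast, the genie step of the lower bound and the ensemble averaging of $A_{wj}$ and of the row-weight distribution $P_W(\cdot)$ are routine.
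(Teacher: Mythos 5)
Your proposal is correct and takes essentially the same route as the paper's own proof, which is omitted and deferred to Theorems 1 and 2 of the cited BMST-R work: there, too, the upper bound is obtained by analyzing the suboptimal decoder that searches within Hamming radius $r^*$ of the hard decisions on the systematic part (union bound over input weights $\leqslant 2r^*$ when the number $N$ of systematic hard-decision errors satisfies $N\leqslant r^*$, and the deterministic bound $W_H(\hat{\boldsymbol{u}})\leqslant\min\{N+r^*,k\}$ with the binomial tail of $N$ otherwise), while the lower bound comes from the genie-aided per-bit binary test averaged over the row-weight distribution $P_W$. The only point to watch is your tie remark: for discrete-output BIOS channels the optimal genie test errs with probability $\Pr\{>\}+\tfrac{1}{2}\Pr\{=\}$, which can be strictly smaller than ${\rm PEP}=\Pr\{\geqslant\}$, so ties are not truly absorbed by the $\geqslant$ convention---but this caveat is inherited from the theorem statement itself (whose cited proof is for continuous-output channels such as BPSK-AWGN, where ties are null) and is not a defect of your approach.
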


\begin{proof}
The proof is similar to those of Theorem 1 and Theorem 2 in~\cite{Ma2017Systematic}, and omitted here.
\end{proof}
%

Then, we have the following corollaries from Theorem~\ref{Bound}.
\begin{corollary}
For a BIOS memoryless channel, the BER of the systematic LDGM code ensemble $\mathscr{C}[n,k]$ under MAP decoding is upper bounded by
\begin{align}
\notag
{\rm BER_{MAP}}\leqslant \min_{0\leqslant r^*\leqslant k}\left\{\sum_{i=1}^{2r^*}\right.
&\frac{i}{k}\left( \sum_{j=0}^{n-k} A_{ij}z^{i+j} \right) \\
&\left.+\sum_{i=r^*+1}^{k}\frac{\min\{i+r^*,k\}}{k}\binom{k}{i}{\rm PEP}(1)^i(1-{\rm PEP}(1))^{k-i}\right\},
\end{align}
where $z = \sum_{y\in \mathcal{Y}}\sqrt{{P_{Y|X}(y|1)}{P_{Y|X}(y|0)}}$ is the Bhattacharyya parameter of the channel~\cite{Arikan2009ChannelPolarization}.
\end{corollary}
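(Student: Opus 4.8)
The plan is to obtain this Corollary as a direct weakening of the upper bound already established in Theorem~\ref{Bound}, by replacing the pairwise error probability ${\rm PEP}(i+j)$ in the first summation with the power $z^{i+j}$ of the Bhattacharyya parameter. The only nontrivial ingredient is the classical Bhattacharyya bound ${\rm PEP}(d)\leqslant z^d$ for every weight $d\geqslant 1$; once this is in hand, the claimed inequality follows termwise because every remaining factor in that summation is nonnegative.

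First I would establish ${\rm PEP}(d)\leqslant z^d$. By the definition recalled just before Theorem~\ref{Bound}, for a codeword $\bm c$ of weight $d$ we have ${\rm PEP}(d)=\sum_{\bm y:P(\bm y|\bm c)\geqslant P(\bm y|\bm 0)}P(\bm y|\bm 0)$. On the summation region the ratio $\sqrt{P(\bm y|\bm c)/P(\bm y|\bm 0)}\geqslant 1$, so multiplying each summand by this factor only enlarges the sum, and extending the summation to all $\bm y$ enlarges it further, yielding ${\rm PEP}(d)\leqslant\sum_{\bm y}\sqrt{P(\bm y|\bm 0)P(\bm y|\bm c)}$. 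Using the memoryless factorization $P(\bm y|\bm x)=\prod_t P(y_t|x_t)$, this sum splits coordinatewise: in the $n-d$ positions where $c_t=0$ each factor equals $\sum_{y}P(y|0)=1$, while in the $d$ positions where $c_t=1$ each factor equals $\sum_{y}\sqrt{P(y|0)P(y|1)}=z$. Hence ${\rm PEP}(d)\leqslant z^d$. I would emphasize that this argument concerns only the likelihood-comparison event $\{P(\bm y|\bm c)\geqslant P(\bm y|\bm 0)\}$ and is therefore valid under the MAP setting used in Theorem~\ref{Bound}.

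Next I would substitute into the first summation of the Theorem~\ref{Bound} bound. Since $i/k\geqslant 0$ and $A_{ij}\geqslant 0$ for all admissible $i,j$, replacing ${\rm PEP}(i+j)$ by the larger quantity $z^{i+j}$ preserves the inequality, so that $\sum_{j}A_{ij}{\rm PEP}(i+j)\leqslant\sum_{j}A_{ij}z^{i+j}$ for each $i$. The second summation, which is expressed through ${\rm PEP}(1)$, is deliberately left unchanged; taking the minimum over $0\leqslant r^*\leqslant k$ on both sides then reproduces exactly the stated bound.

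The derivation is essentially mechanical, so I do not anticipate a genuine obstacle. The only points requiring care are that the Bhattacharyya bound is applied to the pairwise event rather than to the decoder's overall error event, and that the substitution is carried out solely in the first summation. One could in principle also bound the second summation using ${\rm PEP}(1)\leqslant z$, but retaining the exact ${\rm PEP}(1)$ there keeps the second term tighter, which is why the Corollary leaves it in its original form.
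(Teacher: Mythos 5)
Your proposal is correct and follows exactly the paper's route: the paper's own proof consists of the single observation that ${\rm PEP}(d)\leqslant z^{d}$, after which termwise substitution into the upper bound of Theorem~2 gives the result. Your write-up merely adds the standard derivation of the Bhattacharyya bound (which the paper delegates to a citation) and the explicit remark that the second summation is left untouched; both are fine.
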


\begin{proof}
This can be proved by noting that ${\rm PEP}(d) \leqslant z^d$~\cite{Arikan2009ChannelPolarization}.
\end{proof}

\begin{corollary}
For an additive white Gaussian noise~(AWGN) channel with binary phase-shift keying~(BPSK) signalling, the BER of the systematic LDGM code ensemble $\mathscr{C}[n,k]$ under MAP decoding is upper bounded by
\begin{align}
\notag
{\rm BER_{MAP}}\leqslant \min_{0\leqslant r^*\leqslant k}\left\{\sum_{i=1}^{2r^*}\right.
&\frac{i}{k}\left( \sum_{j=0}^{n-k} A_{ij}Q\left(\frac{\sqrt{i+j}}{\sigma}\right) \right) \\
&\left.+\sum_{i=r^*+1}^{k}\frac{\min\{i+r^*,k\}}{k}\binom{k}{i}Q\left(\frac{1}{\sigma}\right)^i\left(1-Q\left(\frac{1}{\sigma}\right)\right)^{k-i}\right\},
\end{align}
\begin{equation}
{\rm BER_{MAP}}\geqslant \sum_{w=0}^{n-k}P_W(w+1)Q\left(\frac{\sqrt{w+1}}{\sigma}\right),
\end{equation}
where $\sigma^2$ is the variance of the noise and $Q(x)$ is the probability that a normalized Gaussian random variable takes a value not less than $x$.
\end{corollary}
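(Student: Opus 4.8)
The plan is to obtain both inequalities by specializing the general BIOS bounds of Theorem~\ref{Bound} to the AWGN/BPSK channel, for which the pairwise error probability admits an exact closed form. Since the code ensemble and the decoding rule are unchanged, Theorem~\ref{Bound} already expresses the upper and lower bounds entirely in terms of the quantity ${\rm PEP}(d)$, so the whole task reduces to evaluating ${\rm PEP}(d)$ for this particular channel and substituting.

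First I would fix the BPSK mapping, say the bit $b\in\mathbb{F}_2$ is transmitted as the amplitude $(-1)^b$, so that the all-zero codeword maps to the all-$(+1)$ vector and the received sample in coordinate $t$ is $y_t=1+n_t$ with $n_t\sim\mathcal{N}(0,\sigma^2)$ independent across $t$. For a codeword $\bm c$ of Hamming weight $d$, the likelihood comparison $P(\bm y|\bm c)\geqslant P(\bm y|\bm 0)$ is equivalent, under the Gaussian density, to the minimum-Euclidean-distance comparison $\|\bm y-\bm s(\bm c)\|^2\leqslant\|\bm y-\bm s(\bm 0)\|^2$, where $\bm s(\cdot)$ denotes the BPSK modulation map. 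The coordinates in which $\bm c$ and $\bm 0$ agree cancel in this difference, leaving only the $d$ differing coordinates, at which $\bm s(\bm 0)=+1$ and $\bm s(\bm c)=-1$.

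Carrying out the cancellation, the decision event reduces to $\sum_{t\in\mathcal{D}}y_t\leqslant 0$, where $\mathcal{D}$ is the set of $d$ differing positions. Writing $y_t=1+n_t$ turns this into $\sum_{t\in\mathcal{D}}n_t\leqslant -d$, and since $\sum_{t\in\mathcal{D}}n_t\sim\mathcal{N}(0,d\sigma^2)$ I would conclude ${\rm PEP}(d)=Q(\sqrt{d}/\sigma)$ \emph{exactly} (the tie event $P(\bm y|\bm c)=P(\bm y|\bm 0)$ has probability zero under the continuous channel and may be ignored). Substituting ${\rm PEP}(i+j)=Q(\sqrt{i+j}/\sigma)$ and ${\rm PEP}(1)=Q(1/\sigma)$ into the upper bound of Theorem~\ref{Bound}, and ${\rm PEP}(w+1)=Q(\sqrt{w+1}/\sigma)$ into the lower bound, yields the two displayed inequalities at once.

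The computation is essentially routine, so there is no deep obstacle; the only point requiring care is verifying that the pairwise error probability is an exact equality rather than merely a bound, which is what lets the AWGN result be stated without the Bhattacharyya relaxation ${\rm PEP}(d)\leqslant z^d$ used in the preceding corollary. I would also confirm that the weight-$d$ geometry depends only on $d$, so that ${\rm PEP}$ is well defined as a function of a single argument; this follows from the i.i.d.\ Gaussian noise together with the fact that only the differing coordinates enter the likelihood ratio.
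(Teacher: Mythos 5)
Your proposal is correct and follows the paper's own proof exactly: the paper also deduces this corollary by substituting ${\rm PEP}(d) = Q(\sqrt{d}/\sigma)$ for the AWGN-BPSK channel into Theorem~2. Your explicit derivation of this pairwise error probability (cancellation of agreeing coordinates, reduction to a sum of $d$ i.i.d.\ Gaussians) is the standard routine computation that the paper leaves implicit.
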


\begin{proof}
This follows from the fact that ${\rm PEP}(d) = Q(\sqrt{d}/\sigma)$ for AWGN-BPSK channels.
\end{proof}

\begin{example}\label{EXEX_BC_k}
Consider the systematic LDGM code ensemble $\mathscr{C}_s[n,k]$ with $\rho = 0.01$ and rate $R=1/2$. The upper bounds and lower bounds for different coding lengths $n=512,1024,2048$ are shown in Fig.~\ref{FIG_BC_k}. The performance of uncoded transmission is also plotted. We can observe that
\begin{itemize}
  \item The upper bound and the lower bound match well in the high SNR region, implying that both the upper bound and the lower bound are tight in the high SNR region.
  \item The upper bound matches the performance curve of uncoded transmission in the low SNR region.
    This can be easily understood since a systematic code with direct transmission of information bits will perform no worse than the uncoded transmission in terms of SNR-BER curves.
  \item The performance predicted by the upper~(lower) bound improves with increasing coding length, implying that the MAP performance of the systematic LDGM codes improves with increasing coding length.
  Despite that the MAP decoding is typically infeasible, these bounds can be employed as a criterion to evaluate the optimality of a practical decoding algorithm.
\end{itemize}
\end{example}

\begin{figure}
  \centering
  \includegraphics[width=\figwidth]{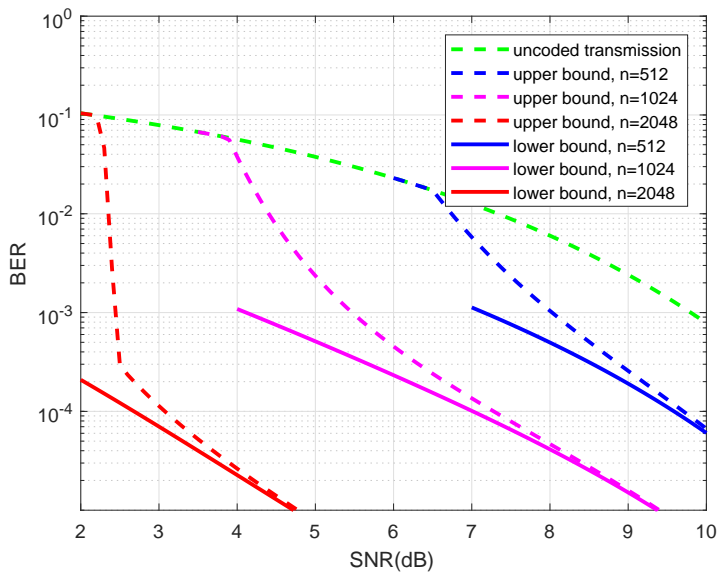}\\
  \caption{The upper bounds and lower bounds in Example~\ref{EXEX_BC_k}. We consider the systematic LDGM code ensemble $\mathscr{C}_s[n,k]$ with $\rho = 0.01$ and rate $R=1/2$. The coding lengths are $n=512,1024,2048$. The performance of uncoded transmission is also plotted.}\label{FIG_BC_k}
\end{figure}

\subsection{Decoding Algorithm}\label{SUBSEC_DEC_BC}
With small $\rho$ and large $n$, a sample from the systematic code ensemble $\mathscr{C}_s[n,k]$ typically has a generator matrix of low density and a parity-check matrix of low density, suggesting that the LDGM codes can be decoded by an iterative BP algorithm over the associated normal graphs~\cite{Forney2001Graph}. In the normal graph, edges represent variables and nodes represent constraints. Associated with each edge is a message that is defined in this paper as the probability mass function of the corresponding variable. All edges connected to a node must satisfy the specific constraint of the node. A full-edge connects to two nodes, while a half-edge connects to only one node. As shown in Fig.~\ref{FIG_BCDEC} , the normal graph of an LDGM code consists of the following two type of nodes.
\begin{itemize}
  \item \textbf{Node} \fbox{+}: It represents the constraint that the sum of all connecting variables must be zero over $\mathbb{F}_2$. The message updating rule at the node \fbox{+} is similar to that at the check node in an LDPC code. The only difference is that the messages of the half-edge need to be calculated from the channel observations.
  \item \textbf{Node} \fbox{=}: It represents the constraint that all connecting variables must take the same value. The message updating rule at the node \fbox{=} is the same as that at the variable node in an LDPC code.
\end{itemize}
In each iteration, the \fbox{=} is first updated and the \fbox{+} is subsequently updated. The decoding algorithm for an LDGM code is described in Algorithm~\ref{ALG_DEC_BC}.
\begin{figure}
  \centering
  \includegraphics[width=\figwidth]{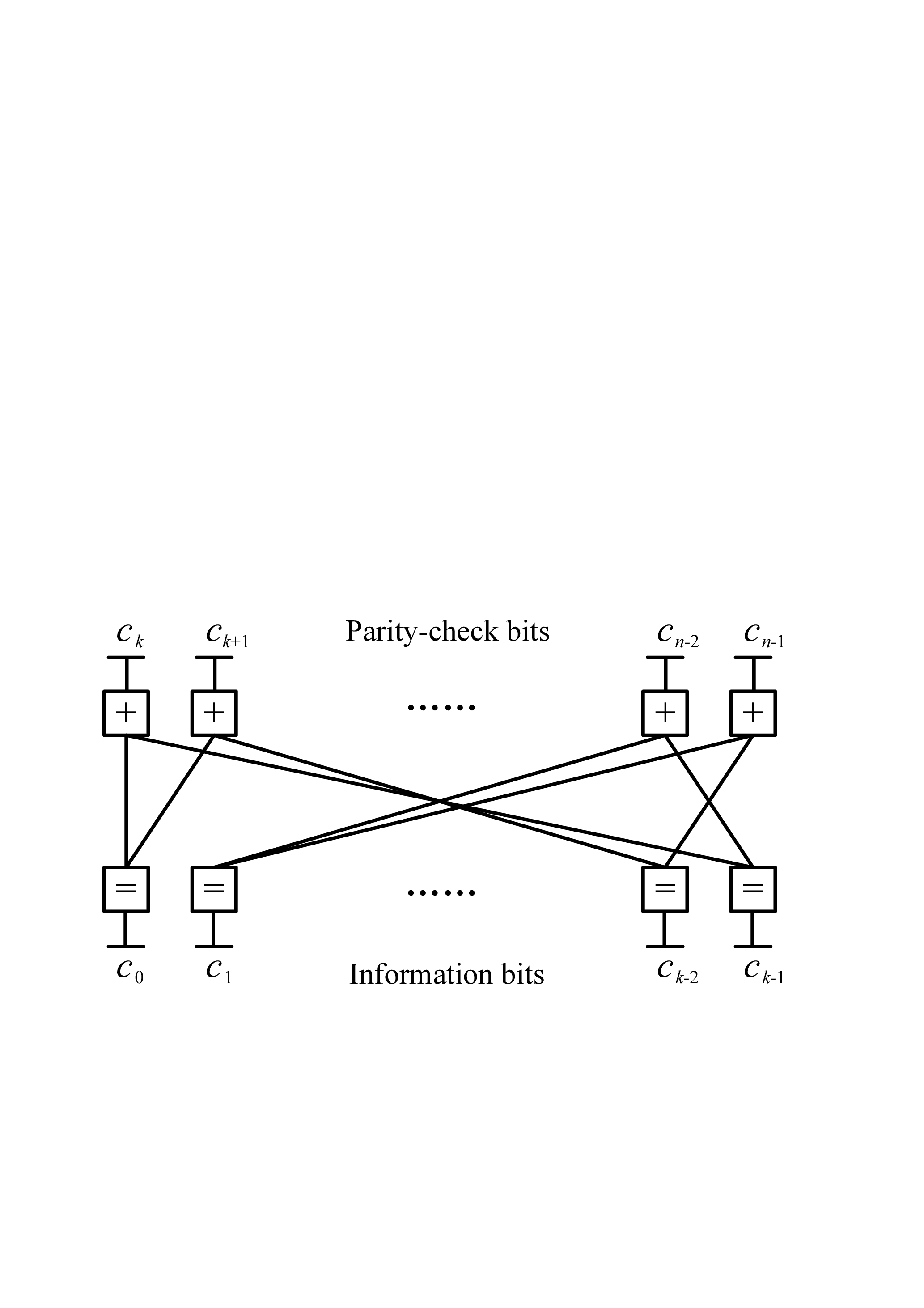}\\
  \caption{The normal graph of a systematic LDGM code $\mathscr{C}[n,k]$.}\label{FIG_BCDEC}
\end{figure}
\begin{algorithm}\caption{Iterative Decoding of the LDGM}\label{ALG_DEC_BC}
\begin{itemize}
  \item \textbf{Initialization}: Set a maximum iteration number $I_{\rm max}>0$. All messages over the half-edges are initialized by computing the \emph{a posteriori} probabilities with only the channel constraint. All messages over the full-edges are initialized as uniformly distributed variables.
  \item \textbf{Iteration}: For $i = 1, 2, \cdots , I_{\rm max}$,
  \begin{itemize}
    \item Update all the nodes of type \fbox{=}.
    \item Update all the nodes of type \fbox{+}.
  \end{itemize}
  \item \textbf{Decision}: Make decision on $\boldsymbol{u}$ by combining the soft extrinsic messages from \fbox{+} to \fbox{=} and the channel observations associated with $\boldsymbol{u}$, resulting in $\hat{\boldsymbol{u}}$.
\end{itemize}
\end{algorithm}

\begin{example}\label{EXEX_BC_fix}
We consider the systematic LDGM code ensemble $\mathscr{C}_s[2048,1024]$ with $\rho=0.010$ and $\rho=0.012$ transmitting over BPSK-AWGN channels.
The maximum iteration for decoding is set as $I_{\rm max} = 50$.
The BER performance with the corresponding upper bound and lower bound are shown in Fig.~\ref{FIG_BC_fix}, from which we can observe that
\begin{itemize}
  \item In the high SNR region, the simulated BER performance curves match very well with the respective theoretical~(lower) bounds, indicating that the iterative decoding algorithm is near optimal~(with respect to the MAP decoding algorithm).
  \item For a fixed coding length, the error floor can be lowered down by increasing $\rho$. However, under the sub-optimal iterative decoding, the performance in the low SNR region with a large $\rho$ is typically worse than that with a small $\rho$.
  \item In the low SNR region, the simulated BER performance curves are not predicted well by the derived bounds and are even worse than the upper bound for large $\rho$, which indicates that the iterative BP decoding is far from optimal for high-density codes. This also motivates us to carry out density evolution analysis and to employ spatial coupling techniques for lowering down the density of the generator matrices.

\end{itemize}
\end{example}
%
\begin{figure}
  \centering
  \includegraphics[width=\figwidth]{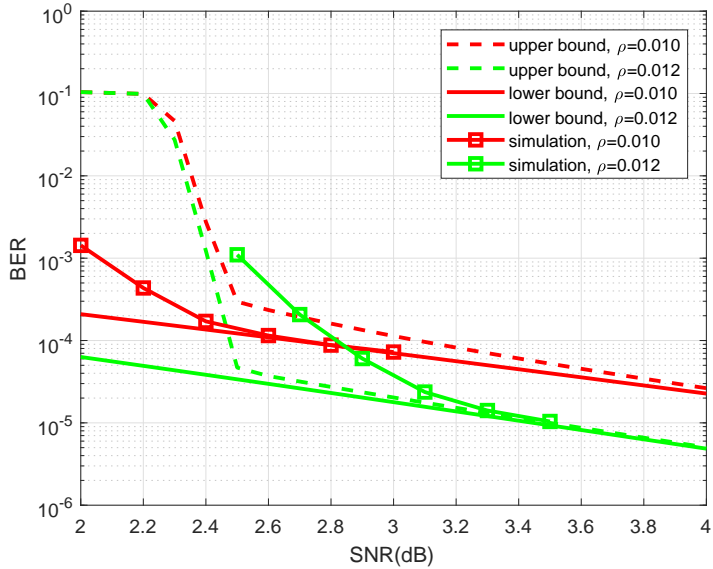}\\
  \caption{The simulated BER performance in Example~\ref{EXEX_BC_fix}. We consider the systematic LDGM code ensemble $\mathscr{C}_s[2048,1024]$ with $\rho=0.010$ and $\rho=0.012$. The corresponding upper bounds and lower bounds are also plotted.}\label{FIG_BC_fix}
\end{figure}
\subsection{Density Evolution Analysis over BECs}
To predict more accurately the performance of iterative BP decoding in the near-capacity region, we turn to density evolution~(DE) analysis for the systematic LDGM ensemble.
For simplicity, we take as an example the BEC with erasure probability $\alpha$.
Let $\varepsilon^{(\ell)}$ and $\eta^{(\ell)}$ be the output erasure probabilities from nodes of type \fbox{=} and \fbox{+} at the $\ell$-th iteration, respectively.
The procedure of DE analysis is initialized by $\eta^{(0)}=1$, since nothing is known at the nodes of \fbox{+} at the beginning of decoding.
Then the erasure probabilities are updated as follows.
\begin{itemize}
\item At a node of type \fbox{=}, the output extrinsic message is an erasure if and only if all other input messages~(including the channel input) are erasures.
 The input from the channel is an erasure with probability $\alpha$, while the input from a check node is an erasure with probability $1-\rho(1-\eta^{(\ell)})$, where $\rho(1-\eta^{(\ell)})$ is the probability that a message is correct and the check node is connected.
 Hence, the probability that a message delivered by a node of type \fbox{=} is an erasure can be calculated as
\begin{equation}\label{EQ_eps}
    \varepsilon^{(\ell)}=\alpha(1-\rho(1-\eta^{(\ell)}))^{n-k-1}.
   \end{equation}
\item At a node of type \fbox{+}, the output extrinsic message is a correct symbol if and only if all other input messages~(including the channel input) are correct.
    The input from the channel is correct with probability $1-\alpha$, while the input from a variable node is correct with probability $1-\rho\varepsilon^{(\ell)}$, where $\rho\varepsilon^{(\ell)}$ is the probability that a message is an erasure and the variable node is connected.
    Hence, the probability that a message delivered by a node of type \fbox{+} is an erasure can be calculated as
    \begin{equation}\label{EQ_eta}
    \eta^{(\ell+1)}=1-(1-\alpha)(1-\rho\varepsilon^{(\ell)})^{k-1}.
    \end{equation}
    \end{itemize}
Obviously, we see that $\eta^{(0)}=1>\eta^{(1)}$, and we can easily prove by induction that $\{\eta^{(\ell)}\}$ is a positive decreasing sequence.
Hence, we can define $\eta^*=\lim_{\ell\rightarrow\infty}\eta^{(\ell)}$, and the estimated erasure probability $\beta$ can be given as
    \begin{equation}
    \beta=\alpha(1-\rho(1-\eta^{*}))^{n-k}.
    \end{equation}
%

\begin{example}\label{EXEX_BC_DE}
Consider the systematic LDGM code ensemble $\mathscr{C}_s[1024,512]$ with $\rho=0.012$.
The DE result and simulated decoding erasure rate are shown in Fig.~\ref{FIG_BC_DE}, where the corresponding upper bounds and lower bounds are also plotted.
We observe that the DE analysis predicts the decoding performance more accurately than the bounds.
However, in the near-capacity region~(around $\alpha=0.5$), a significant gap still exists between the simulation result and the DE result.
This can be explained as below.

Without loss of generality, we focus on the decoding of the first bit $u_0$.
Let $W$ be the number of ones in the first row of matrix $\mathbf{P}$, where the distribution of $W$ is given by
\begin{equation}
\mathrm{Pr}\{W=w\}=\binom{n-k}{w}\rho^w(1-\rho)^{n-k-w}.
\end{equation}
From the view of the receiver, $u_0$ only affects $W+1$ out of $n$ components of the received vector $\bm{y}$.
Equivalently, we say that $u_0$ is transmitted $W+1$ times, among which once~(as an information bit) is over the considered BIOS channel and $W$ times~(as parity-check bits) are also over the considered BIOS channel but with binary interferences from other information bits.
To be more clear, consider a code with the generator matrix
$$\mathbf{G}=\left[\begin{matrix}
1 &0 &0 &1 &0 &1\\
0 &1 &0 &0 &1 &1\\
0 &0 &1 &1 &0 &1\\
\end{matrix}\right],$$
transmitting over BSCs. The message $u_0$ is transmitted three times. One is over BSC and the other two are also over the BSCs but with binary interferences.
The resulting  received symbols are, respectively, $y_0=u_0+e_0$, $y_3=u_0+u_2+e_3$ and $y_5=u_0+u_1+u_2+e_5$, where $e_i$'s are i.i.d. Bernoulli noises and $u_1$, $u_2$ can be viewed as interferences.
When decoding $u_0$, the messages associated with other information bits can be viewed as side information for the interference channels, which are iteratively updated in the BP algorithm\footnote{If all the side information~(all other information bits) were perfectly known at the decoder, the performance of $u_0$ should achieve the lower bound, which is indeed derived by assuming that $u_0$ is transmitted $W+1$ times over the BIOS channel.}.
For DE analysis, these channel side information are assumed to be statistically independent and become more reliable with the iterations.
However, when decoding finite-length codes, they are typically correlated especially when $\rho$ is relatively large.

%
%
\end{example}
\begin{figure}
  \centering
  \includegraphics[width=\figwidth]{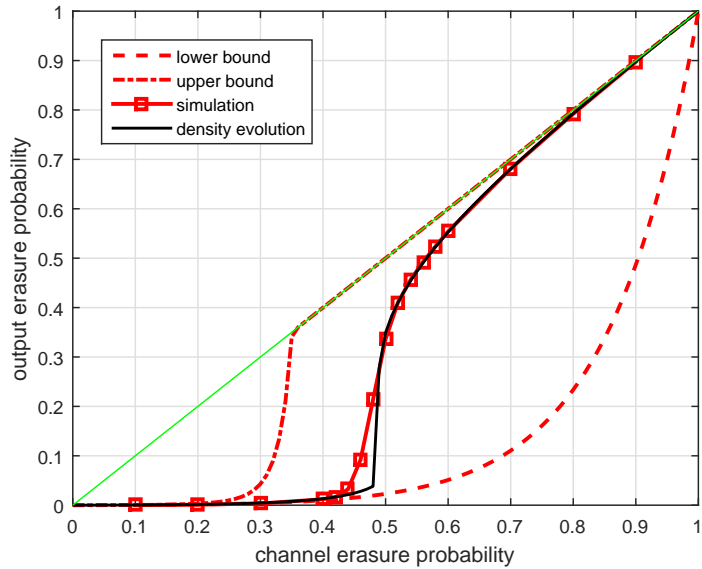}\\
  \caption{The DE result and simulated erasure rate in Example~\ref{EXEX_BC_DE}. We consider the systematic LDGM code ensemble $\mathscr{C}_s[1024,512]$ with $\rho=0.012$. The corresponding upper bounds and lower bounds are also plotted.}\label{FIG_BC_DE}
\end{figure}

\section{Systematic Spatially Coupled LDGM Codes}\label{SEC_4}

As shown in Section~\ref{SEC_3}, the systematic LDGM block code under iterative decoding does not perform well~(say, the performance is even worse than the MAP upper bound) in the low SNR region.
This is mainly caused by the relatively high density of the generator matrix.
To reduce the density but remain the row weights, which dominate the error floors, we turn to the LDPC convolutional codes, also known as spatially coupled LDPC~(SC-LDPC) codes~\cite{Felstrom1999SCLDPC}.
An important feature of SC-LDPC codes is the threshold saturation~\cite{Urbanke2011ThresholdSaturation} that the decoding performance of SC-LDPC codes under BP decoding approaches the MAP decoding performances of uncoupled LDPC block codes. The threshold saturation of SC-LDPC codes implies that the waterfall performance of the systematic LDGM codes, which can also be viewed as LDPC codes, can be improved by spatial coupling. In this section, we present the systematic convolutional LDGM codes~(SC-LDGM\footnote{This acronym can also be interpreted here as ``spatially coupled LDGM" without causing any inconvenience.}) and show that they have the following attractive properties.
\begin{itemize}
  \item They are easily configurable in the sense that any rational code rate can be achieved without complex optimization.
  \item They share the same closed-form lower bounds with the corresponding systematic LDGM codes, which can be used to predict the error floors.
  \item They are iteratively decodable with performance approaching the theoretical limits in the waterfall region and matching with the analytical bounds in the error floor region.
\end{itemize}

\subsection{Encoding Algorithm}
Let $\boldsymbol{u} = (\boldsymbol{u}^{(0)},\cdots,\boldsymbol{u}^{(L-1)})$ be the data to be transmitted, where $\boldsymbol{u}^{(t)}=(u^{(t)}_0,\cdots,u^{(t)}_{k-1})\in\mathbb{F}_2^k$ for $0\leqslant t\leqslant L-1$. The encoding algorithm of the SC-LDGM code with memory $m$ is described in Algorithm~\ref{ALG_ENC_SC}~(see Fig.~\ref{FIG_SCENC} for reference), where $\mathbf{P}_\ell = \{P_{\ell,i,j}\}$ for $0 \leqslant \ell \leqslant m$ is a matrix of size $k\times (n-k)$ generated by a random splitting process. In the random splitting process, each element of $\mathbf{P}$, which is generated according to Definition~\ref{DEF_LDGM_BLOCK_EN}, is sent to $\mathbf{P}_{\ell}$ with probability $1/(m+1)$~(see Fig.~\ref{FIG_SPLIT} for reference). More precisely, the $\mathbf{P}_{\ell}$ can be generated as the following steps.
\begin{itemize}
  \item Generate $\mathbf{P} = \{P_{i,j}\}$ of size $k\times (n-k)$ according to Definition~\ref{DEF_LDGM_BLOCK_EN}.
  \item For each $0\leqslant i \leqslant k$ and $0\leqslant j \leqslant n-k$, draw a random number $s$ independently and uniformly from $\{0,1,\dots,m\}$. Set $P_{\ell,i,j} = P_{i,j}\cdot \delta_{s,\ell}$, for $0 \leqslant \ell \leqslant m$, where $\delta_{s,\ell}$ is the Kronecker delta function.
\end{itemize}

\begin{figure}
  \centering
  \includegraphics[width=\figwidth]{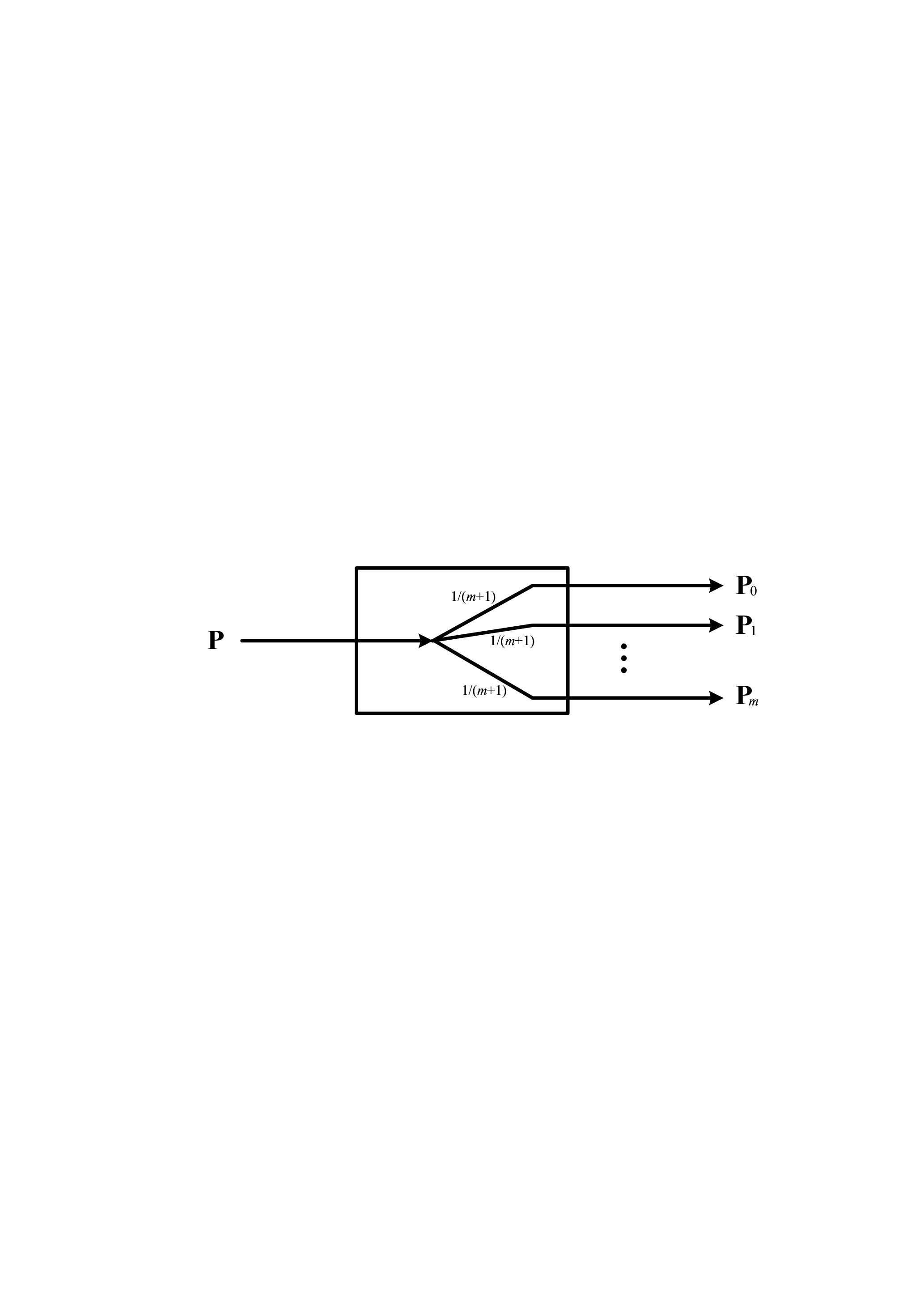}\\
  \caption{Graphical illustrations of the random splitting process.}\label{FIG_SPLIT}
\end{figure}

\begin{figure}
  \centering
  \includegraphics[height=\figwidth,angle=90]{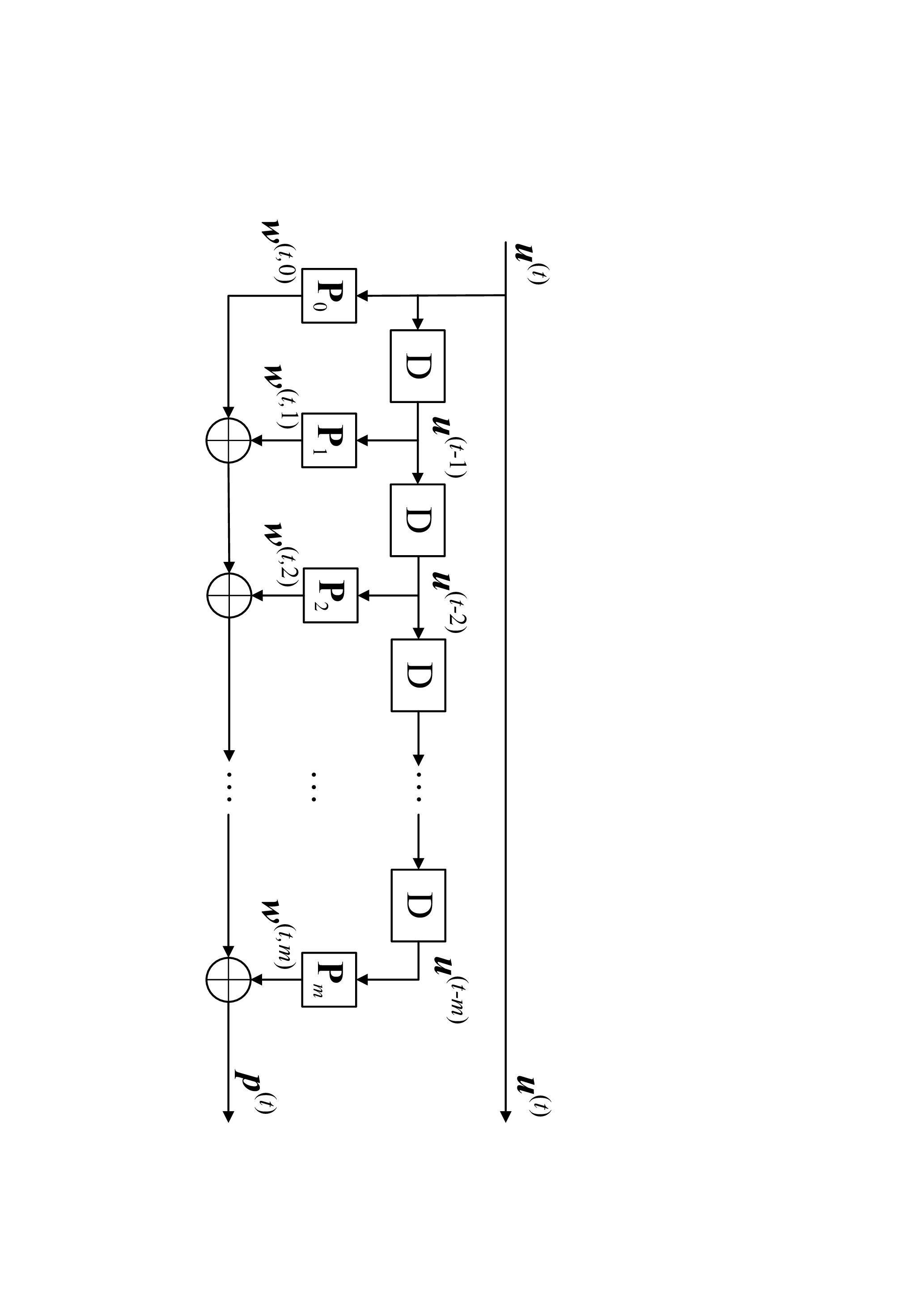}\\
  \caption{Encoding structure of a SC-LDGM code with memory $m$.}\label{FIG_SCENC}
\end{figure}

\begin{algorithm}\caption{Encoding of the SC-LDGM}\label{ALG_ENC_SC}
\begin{itemize}
  \item \textbf{Initialization}: For $t<0$, set $\boldsymbol{u}^{(t)} = \boldsymbol{0} \in \mathbb{F}_2^k$.
  \item \textbf{Iteration}: For $0\leqslant t\leqslant L-1$,
  \begin{itemize}
    \item For $0 \leqslant \ell \leqslant m$, compute $\boldsymbol{w}^{(t,\ell)} = \boldsymbol{u}^{(t)}\mathbf{P}_\ell \in \mathbb{F}_2^{(n-k)}$.
    \item Compute $\boldsymbol{p}^{(t)} = \sum_{\ell=0}^{m}\boldsymbol{w}^{(t,\ell)}$.
    \item Take $\boldsymbol{c}^{(t)} = (\boldsymbol{u}^{(t)},\boldsymbol{p}^{(t)})$ as the sub-frame for transmission at time $t$.
  \end{itemize}
  \item \textbf{Termination}: For $L\leqslant t\leqslant L+m-1$, set $\boldsymbol{u}^{(t)} = \boldsymbol{0} \in \mathbb{F}_2^k$ and compute $\boldsymbol{c}^{(t)}$ following Step \textbf{Iteration}. Note that the information bits~(known as zero bits) should not be transmitted.
\end{itemize}
\end{algorithm}
The total code rate of the SC-LDGM code is $R=\frac{kL}{nL+m(n-k)}$, which is slightly less than that of the systematic LDGM block code. However, the rate loss can be negligible for large $L$.

\subsection{Algebraic Description}
The generator matrix of size $kL\times[nL+m(n-k)]$ of the SC-LDGM code can be written as
\begin{equation}
\mathbf{G} = \left(
 \begin{array}{cccccccccc}
   \mathbf{I} &            &            &            &\mathbf{P}_{0}  & \cdots         & \mathbf{P}_{m} &                &        &\\
              & \mathbf{I} &            &            &                & \mathbf{P}_{0} & \cdots         & \mathbf{P}_{m} &        &\\
              &            & \ddots     &            &                &                &     \ddots     & \ddots         & \ddots &\\
              &            &            & \mathbf{I} &                &                &                &\mathbf{P}_{0}  & \cdots & \mathbf{P}_{m}\\
 \end{array}
\right).
\end{equation}
By the random splitting process, we have $\mathbf{P} = \sum_{\ell=0}^{m}\mathbf{P}_{\ell}$. Therefore, the row weight distribution of the SC-LDGM code ensemble is the same as that of the corresponding systematic LDGM code ensemble, indicating that the SC-LDGM code ensemble shares the same closed-form lower bound with the corresponding systematic LDGM code ensemble.

The SC-LDGM code has a parity-check matrix of banded diagonal form, as is the same case for the SC-LDPC code. So the SC-LDGM code can also be constructed by randomly splitting the parity-check matrix of the LDGM block code. More precisely, The parity-check matrix of size $[(L+m)(n-k)] \times [nL+m(n-k)]$ of the SC-LDGM code can be written as
\begin{equation}
\mathbf{H} = \left(
 \begin{array}{cccccccccc}
   \mathbf{P}_{0}^T &                  &        &                  & \mathbf{I} & & & & &\\
   \vdots           & \mathbf{P}_{0}^T &        &                  & & \mathbf{I} & & & &\\
   \mathbf{P}_{m}^T & \vdots           & \ddots &                  & & & \ddots     & & &\\
                    & \mathbf{P}_{m}^T & \ddots & \mathbf{P}_{0}^T & & & & \ddots     & &\\
                    &                  & \ddots & \vdots           & & & & & \mathbf{I} &\\
                    &                  &        & \mathbf{P}_{m}^T & & & & & & \mathbf{I}\\
 \end{array}
\right).
\end{equation}

\subsection{Decoding Algorithm}
Fig.~\ref{FIG_SCDEC} shows the high-level normal graph of a SC-LDGM code, where an edge represents a sequence of random variables and its associated messages are collectively written in a sequence. Notice that such a simplified representation is just for the convenience of describing the message passing. For message processing, any edge that represents multiple random variables must be treated as multiple separated edges. The high-level normal graph of a SC-LDGM code can be divided into layers, where each layer typically consists of a node of type \fbox{=}, a node of type \fbox{+} and $m+1$ nodes of type \fbox{$\mathbf{P}_{\ell}$}. The node of type \fbox{=} and the node of type \fbox{+} are the same as those discussed in Subsection~\ref{SUBSEC_DEC_BC}. The node of type \fbox{$\mathbf{P}_{\ell}$} can be viewed as a soft-in soft-out decoder of an LDGM block code, which performs Algorithm~\ref{ALG_DEC_BC} to compute the soft outputs~(extrinsic messages). The edge connecting \fbox{$\mathbf{P}_{\ell}$} and \fbox{=} represents the information bits of the LDGM block code, while the edge connecting \fbox{$\mathbf{P}_{\ell}$} and \fbox{+} represents the parity-check bits.

Assume that $\boldsymbol{c}^{(t)}$ is modulated using BPSK and transmitted over an AWGN channel, resulting in a received vector $\boldsymbol{y}^{(t)}$. We consider an iterative sliding window decoding algorithm with a fixed decoding delay $d\geqslant0$. The iterative sliding-window algorithm with decoding delay $d$ works over a subgraph consisting of $d + 1$ consecutive layers. The schedule is described in Algorithm~\ref{ALG_DEC_SC}.

\begin{figure}
  \centering
  \includegraphics[height=\figwidth,angle=90]{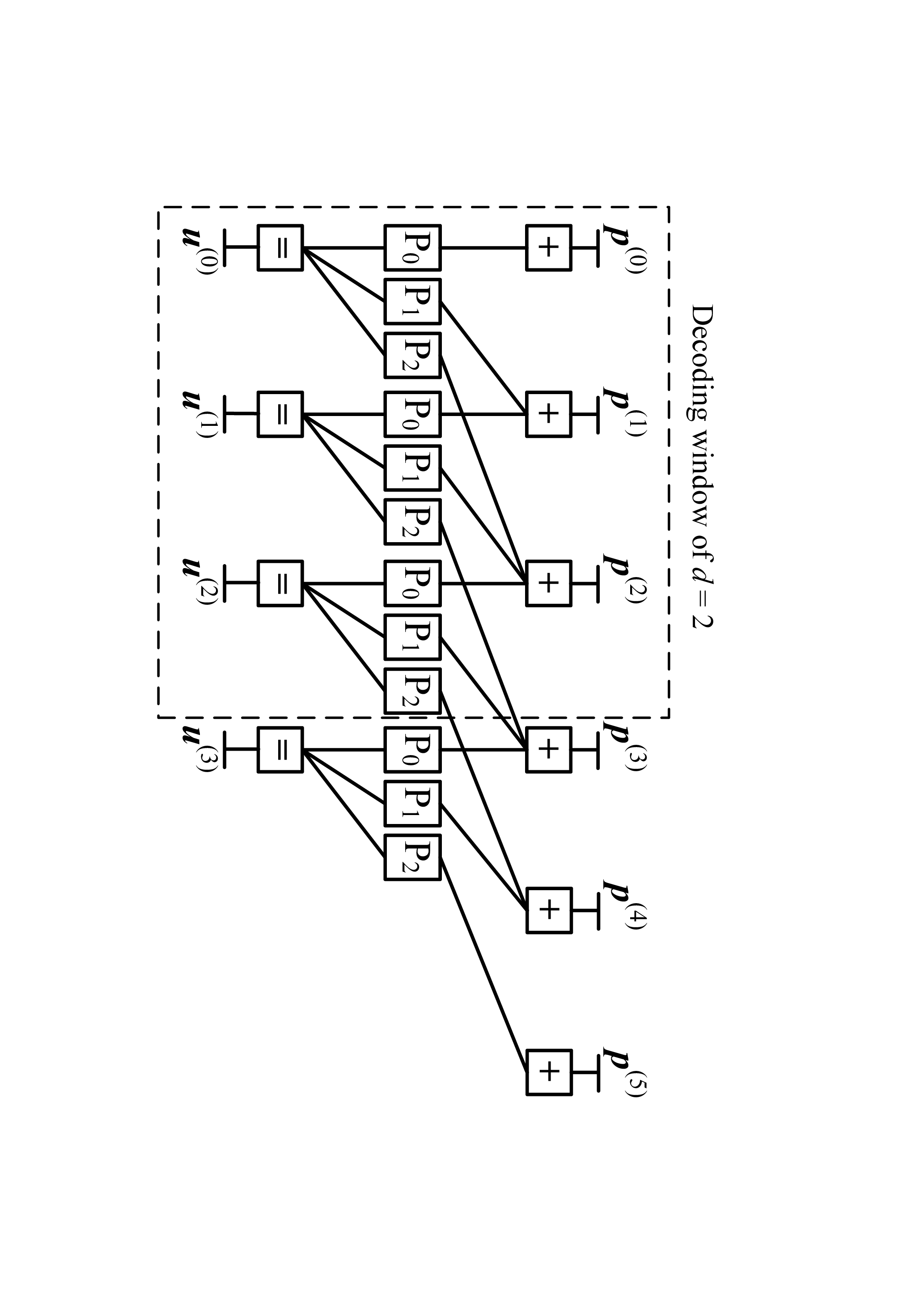}\\
  \caption{The normal graph of a SC-LDGM code with $L = 4$ and $m = 2$. The decoding window of $d=2$ is also plotted.}\label{FIG_SCDEC}
\end{figure}

\begin{algorithm}\caption{Iterative Sliding Window Decoding of the SC-LDGM}\label{ALG_DEC_SC}
\begin{itemize}
  \item \textbf{Global initialization}: Set a maximum global iteration $J_{\rm max}$. For $0 \leqslant t \leqslant d-1$, considering only the channel constraint, compute the \emph{a posteriori} probabilities associated with $\boldsymbol{c}^{(t)}$ from the received vector $\boldsymbol{y}^{(t)}$. All messages over the other edges within and connecting to the decoding window are initialized as uniformly distributed variables.
  \item \textbf{Sliding window decoding}: For $t=0,1,\cdots,L-1$,
  \begin{enumerate}
    \item Local initialization: If $t+d \leqslant L+m-1$, compute the \emph{a posteriori} probabilities from the received vector $\boldsymbol{y}^{(t+d)}$ and all messages over other edges within and connecting to the $(t + d)$-th layer are initialized as uniformly distributed variables.
    \item Iteration: For $j=1,2,\cdots,J_{\rm max}$,
    \begin{enumerate}
      \item Forward recursion: For $i=0,1,\cdots,d$, the $(t+i)$-th layer performs a message passing algorithm scheduled as
      \begin{displaymath}
      \fbox{+}\rightarrow\fbox{$\mathbf{P}_{0}$}\rightarrow\cdots\rightarrow\fbox{$\mathbf{P}_{m}$}\rightarrow\fbox{=}\rightarrow\fbox{$\mathbf{P}_{0}$}\rightarrow\cdots\rightarrow\fbox{$\mathbf{P}_{m}$}
      \end{displaymath}
      \item Backward recursion: For $i=d,d-1,\cdots,0$, the $(t+i)$-th layer performs a message passing algorithm scheduled as
      \begin{displaymath}
      \fbox{=}\rightarrow\fbox{$\mathbf{P}_{0}$}\rightarrow\cdots\rightarrow\fbox{$\mathbf{P}_{m}$}\rightarrow\fbox{+}\rightarrow\fbox{$\mathbf{P}_{0}$}\rightarrow\cdots\rightarrow\fbox{$\mathbf{P}_{m}$}
      \end{displaymath}
      \item Decision: Make decision on $\boldsymbol{u}^{(t)}$, resulting in $\hat{\boldsymbol{u}}^{(t)}$. If the entropy-based stopping criterion~\cite{Ma2015Block} are satisfied, output $\hat{\boldsymbol{u}}^{(t)}$ and exit the iteration.
    \end{enumerate}
    \item Cancelation: Remove the effect of $\hat{\boldsymbol{u}}^{(t)}$ on $\boldsymbol{y}^{(t+1)},\cdots,\boldsymbol{y}^{(t+m)}$.
  \end{enumerate}
\end{itemize}
\end{algorithm}

\subsection{Numerical Results}
We first consider the erasure channel, in which the performance can be estimated by DE analysis.

\begin{example3}
In order to compare with the systematic LDGM code, we consider SC-LDGM codes with tail-bitting, which has the same node degree distribution as the LDGM block code.
Hence, these codes share the same DE result and the lower bound, which are shown in Fig.~\ref{FIG_SC_DE}.
We can observe that the decoding performance improves slightly with the increase of memory $m$ and the performance of SC-LDGM codes approach the DE result, which is as expected and confirms our analysis.
\end{example3}

Actually, with spatial coupling, the side information with respect to those interference channels are collected from, on average, $(k-1)\rho$ out of $(m+1)k-1$ rather than $k-1$  information bits.
Hence we expect that the performance of SC-LDGM codes can be predicted more accurately by DE analysis when $m$ is relatively large.



\begin{figure}
  \centering
  \includegraphics[width=\figwidth]{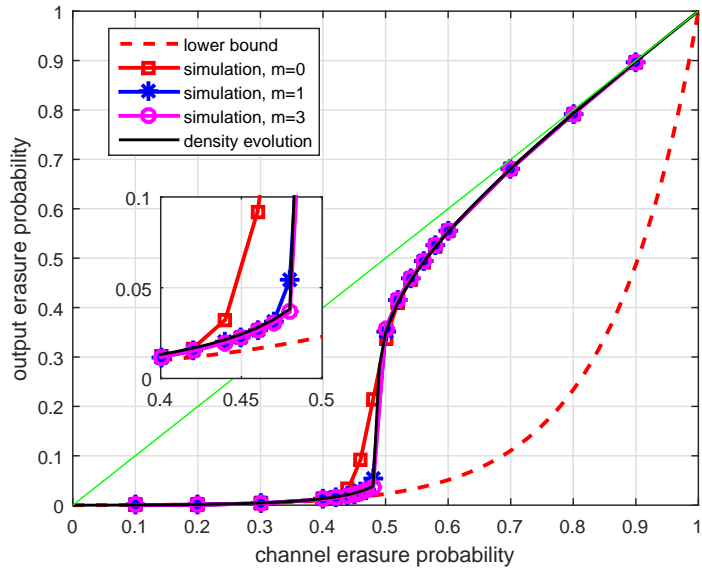}\\
  \caption{The decoding erasure rate performance of the SC-LDGM codes in Example~\ref{EXEX_BC_DE}~(Cont'd). We consider SC-LDGM codes with memories $m=0,1,3$ corresponding to a systematic LDGM code $\mathscr{C}_s[1024,512]$ with $\rho=0.012$. The data block length is set as $L=150$. The DE result, upper bound and the lower bound are also plotted.}\label{FIG_SC_DE}
\end{figure}

In the remainder of this section, we focus on the BPSK-AWGN channels. The iterative sliding window decoding algorithm is performed with a maximum iteration number of $J_{\rm max}=18$ and a decoding window $d=2m$. The threshold for the entropy stopping criterion is set as $\epsilon = 10^{-5}$.

\begin{example}\label{EXEX_SC_m}
Consider SC-LDGM codes with memories $m=0,1,3,6$ corresponding to a systematic LDGM code $\mathscr{C}_s[2048,1024]$ with $\rho=0.012$. The data block length is set as $L=150$. The BER performance is shown in Fig.~\ref{FIG_SC_m}, where the upper bound of the LDGM block code~($m=0$) and the lower bound of all codes are also plotted. We can observe that
\begin{itemize}
  \item The simulated BER performance curve matches the lower bound well in the high SNR region, implying that the iterative sliding window decoding algorithm is near optimal~(with respect to the MAP decoding algorithm) in the high SNR region.
  \item The waterfall performance of the SC-LDGM code is better than that of the corresponding LDGM block code~($m=0$). The waterfall performance improves with increasing encoding memory $m$.
\end{itemize}

\end{example}
\begin{figure}
  \centering
  \includegraphics[width=\figwidth]{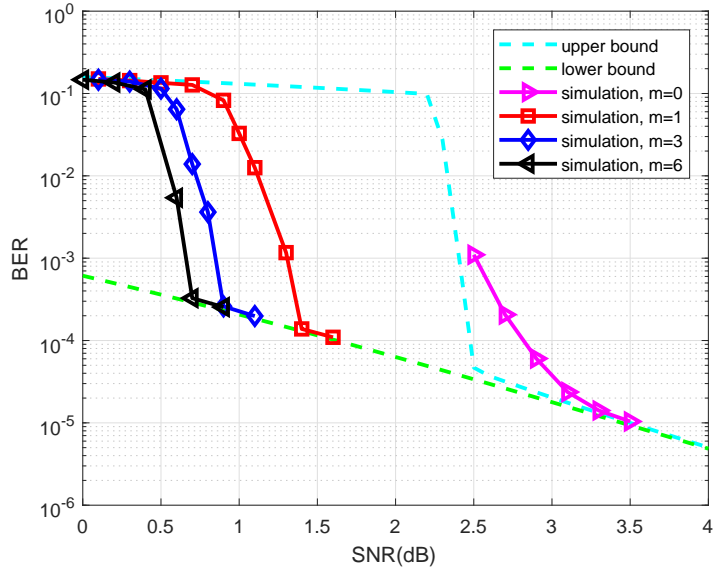}\\
  \caption{The BER performance of the SC-LDGM codes in Example~\ref{EXEX_SC_m}. We consider SC-LDGM codes with memories $m=0,1,3,6$ corresponding to a systematic LDGM code $\mathscr{C}_s[2048,1024]$ with $\rho=0.012$. The data block length is set as $L=150$. The upper bound of the LDGM block code~($m=0$) and the lower bound of all codes are also plotted.}\label{FIG_SC_m}
\end{figure}

\begin{example}\label{EXEX_SC_rho}
Consider the SC-LDGM codes corresponding to the systematic LDGM codes $\mathscr{C}_s[2048,1024]$. The parameters $\rho$ and $m$ are specified in the legends. All codes are terminated properly such that the total rates are $R=0.49$. The BER performance is shown in Fig.~\ref{FIG_SC_rho}, where the corresponding lower bounds are also plotted. We can observe that the error floor can be lowered by increasing $\rho$. However, for a fixed memory $m$, under the sub-optimal iterative decoding, the waterfall performance with a large $\rho$ is typically worse than that with a small $\rho$. Therefore, the memory $m$ of the SC-LDGM code with a large $\rho$ should be set large, implying that the decoding window $d$~(hence the decoding latency) will be large.
\end{example}
\begin{figure}
  \centering
  \includegraphics[width=\figwidth]{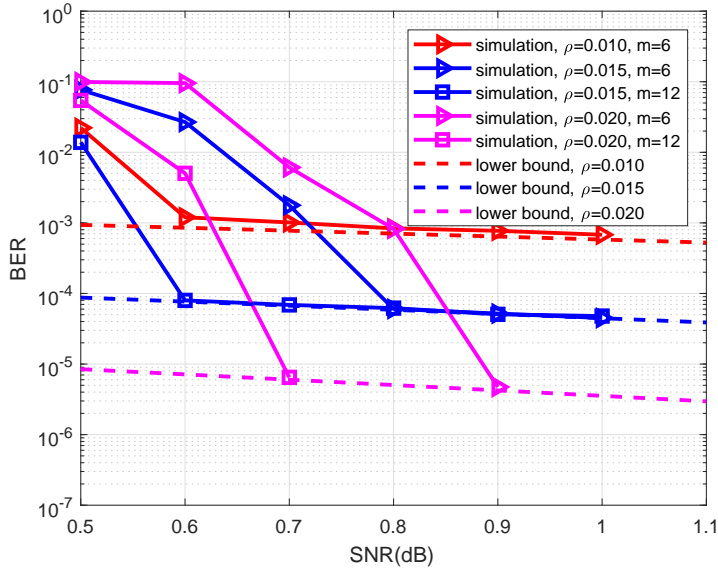}\\
  \caption{The BER performance of the SC-LDGM codes in Example~\ref{EXEX_SC_rho}. We consider the SC-LDGM codes corresponding to the systematic LDGM codes $\mathscr{C}_s[2048,1024]$. The parameters $\rho$ and $m$ are specified in the legends. All codes are terminated properly such that the total rates are $R=0.49$. The corresponding lower bounds are also plotted.}\label{FIG_SC_rho}
\end{figure}

\begin{example}\label{EXEX_SC_rate}
Consider the SC-LDGM codes with memory $m=6$. The information subsequence length is $k=1024$ and the data block length is $L=300$. The parameters $\rho$ and the total code rate are specified in the legends. The BER performance is shown in Fig.~\ref{FIG_SC_rate}, where the corresponding lower bounds and shannon limits under BPSK constraint are also plotted. We can observe that, as the SNR increases, the performance curves of the SC-LDGM codes drop down to the respective lower bounds for all considered code rates. We see that the SC-LDGM code performs about $0.7~{\rm dB}$ away from the respective Shannon limits at the BER of $10^{-4}$ for all SC-LDGM codes given in Fig.~\ref{FIG_SC_rate}.
\end{example}
\begin{figure}
  \centering
  \includegraphics[width=\figwidth]{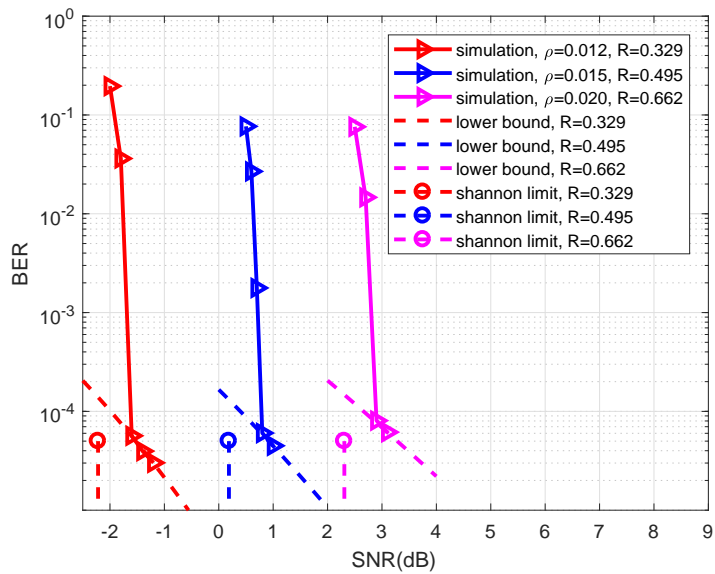}\\
  \caption{The BER performance of the SC-LDGM codes in Example~\ref{EXEX_SC_rate}. We consider the SC-LDGM codes with memory $m=6$. The information subsequence length is $k=1024$ and the data block length is $L=300$. The parameters $\rho$ and the total code rate are specified in the legends. The corresponding lower bounds and shannon limits under BPSK constraint are also plotted.}\label{FIG_SC_rate}
\end{figure}

\section{Discussion and Conclusion}\label{SEC_5}

In this paper, we have proposed an LDGM code ensemble, which is defined by the Bernoulli process. For asymptotic performance analysis, we have proved that the proposed ensemble is capacity-achieving over BIOS memoryless channels. The proof technique is different from existing ones whereby the performance criterion is BER instead of FER. For finite length performance analysis, an upper bound and a lower bound are presented, both of which are tight in the high SNR region and helpful to predict the error floor and to evaluate the near-optimality of the iterative decoding algorithm.

Practically, we focus on the performance in the error floor and waterfall regions. To lower down the error floor, we can simply increase the coding length or the probability $\rho$ in the generator matrix, as shown in Example~1 and Example~2.
As mentioned in~\cite{Spielman1996LDPC}, the LDGM codes can be treated as the ``error reduction'' codes, which can be used as the inner code for the concatenated coding system.
We show by DE analysis over BECs that the iterative decoding performance can be improved by a sparser generator matrix.
 Hence we employ the spatial coupling technique to improve the waterfall performance, which has been proved to be effective for LDPC codes, and proposed the SC-LDGM codes.
  The main advantage of the presented SC-LDGM codes is their flexible construction and predicable performance. Numerical results showed that, under iterative BP decoding algorithm, the SC-LDGM codes perform about $0.7~{\rm dB}$ away from the Shannon limits for various code rates.

%
%

\ifCLASSOPTIONcaptionsoff
  \newpage
\fi

\bibliographystyle{IEEEtran}
\bibliography{IEEEabrv,bibfile}

\begin{thebibliography}{10}
\providecommand{\url}[1]{#1}
\csname url@samestyle\endcsname
\providecommand{\newblock}{\relax}
\providecommand{\bibinfo}[2]{#2}
\providecommand{\BIBentrySTDinterwordspacing}{\spaceskip=0pt\relax}
\providecommand{\BIBentryALTinterwordstretchfactor}{4}
\providecommand{\BIBentryALTinterwordspacing}{\spaceskip=\fontdimen2\font plus
\BIBentryALTinterwordstretchfactor\fontdimen3\font minus
  \fontdimen4\font\relax}
\providecommand{\BIBforeignlanguage}[2]{{%
\expandafter\ifx\csname l@#1\endcsname\relax
\typeout{** WARNING: IEEEtran.bst: No hyphenation pattern has been}%
\typeout{** loaded for the language `#1'. Using the pattern for}%
\typeout{** the default language instead.}%
\else
\language=\csname l@#1\endcsname
\fi
#2}}
\providecommand{\BIBdecl}{\relax}
\BIBdecl

\bibitem{Shannon1948Theory}
C.~Shannon, ``A mathematical theory of communication,'' \emph{Bell Syst. Tech.
  J.}, vol.~27, no.~3, pp. 379--423, July 1948.

\bibitem{Elias1955Coding}
P.~Elias, ``Coding for noisy channels,'' \emph{IRE Conv. Rec.}, vol.~4, pp.
  37--46, 1955.

\bibitem{Gallager68}
R.~Gallager, \emph{Information Theory and Reliable Communication}.\hskip 1em
  plus 0.5em minus 0.4em\relax New York, NY: John Wiley and Sons, Inc., 1968.

\bibitem{Shulman1999RandomCoding}
N.~Shulman and M.~Feder, ``Random coding techniques for nonrandom codes,''
  \emph{IEEE Trans. Inf. Theory}, vol.~45, no.~6, pp. 2101--2104, Sept. 1999.

\bibitem{Gallager1962ldpc}
R.~Gallager, \emph{Low-Density Parity-check Codes}.\hskip 1em plus 0.5em minus
  0.4em\relax Cambridge, MA: MIT Press, 1963.

\bibitem{MacKay1997LDPC}
D.~MacKay and R.~Neal, ``Near {Shannon} limit performance of low density parity
  check codes,'' \emph{IET Electron. Lett.}, vol.~33, no.~6, pp. 457--458, Mar.
  1997.

\bibitem{Spielman1996LDPC}
D.~Spielman, ``Linear-time encodable and decodable error-correcting codes,''
  \emph{IEEE Trans. Inf. Theory}, vol.~42, no.~6, pp. 1723--1731, Nov. 1996.

\bibitem{Urbanke2001Irregular}
T.~Richardson, M.~Shokrollahi, and R.~Urbanke, ``Design of capacity-approaching
  irregular low-density parity-check codes,'' \emph{IEEE Trans. Inf. Theory},
  vol.~47, no.~2, pp. 619--637, Feb. 2001.

\bibitem{Urbanke2001LDPC45}
S.~Chung, G.~Forney, T.~Richardson, and R.~Urbanke, ``On the design of
  low-density parity-check codes within 0.0045 {dB} of the {Shannon} limit,''
  \emph{IEEE Commun. Lett.}, vol.~5, no.~2, pp. 58--60, Feb. 2001.

\bibitem{Shokrollahi2006Raptor}
A.~Shokrollahi, ``Raptor codes,'' \emph{IEEE Trans. Inf. Theory}, vol.~52,
  no.~6, pp. 2551--2567, Jun. 2006.

\bibitem{Luby2002LT}
M.~Luby, ``{LT} codes,'' in \emph{Annu. IEEE Symp. Found. Comput. Sci.},
  Vancouver, Canada, Nov. 2002, pp. 271--280.

\bibitem{Kakhaki2012Sparse}
A.~Kakhaki, H.~Abadi, P.~Pad, H.~Saeedi, F.~Marvasti, and K.~Alishahi,
  ``Capacity achieving linear codes with random binary sparse generating
  matrices over the binary symmetric channel,'' in \emph{Int. Symp. Inf.
  Theory}, Cambridge, USA, July 2012, pp. 621--625.

\bibitem{Kumar2014ThresholdSaturation}
S.~Kumar, A.~Young, N.~Macris, and H.~Pfister, ``Threshold saturation for
  spatially coupled {LDPC} and {LDGM} codes on {BMS} channels,'' \emph{IEEE
  Trans. Inf. Theory}, vol.~60, no.~12, pp. 7389--7415, Dec. 2014.

\bibitem{Urbanke2011ThresholdSaturation}
S.~Kudekar, T.~Richardson, and R.~Urbanke, ``Threshold saturation via spatial
  coupling: Why convolutional ldpc ensembles perform so well over the {BEC},''
  \emph{IEEE Trans. Inf. Theory}, vol.~57, no.~2, pp. 803--834, Feb. 2011.

\bibitem{Kudekar2010}
S.~Kudekar, C.~Measson, T.~Richardson, and R.~Urbanke, ``Threshold saturation
  on {BMS} channels via spatial coupling,'' in \emph{Int. Symp. Turbo Codes
  Iterative Inf. Process.}, Brest, France, Sept. 2010, pp. 309--313.

\bibitem{Kudekar2013}
S.~Kudekar, T.~Richardson, and R.~Urbanke, ``Spatially coupled ensembles
  universally achieve capacity under belief propagation,'' \emph{IEEE Trans.
  Inf. Theory}, vol.~59, no.~12, pp. 7761--7813, Dec. 2013.

\bibitem{Ma2016Coding}
X.~Ma, ``Coding theorem for systematic low density generator matrix codes,'' in
  \emph{Int. Symp. Turbo Codes Iterative Inf. Process.}, Brest, France, Sept.
  2016, pp. 11--15.

\bibitem{Lin2018Coding}
W.~Lin, S.~Cai, B.~Wei, and X.~Ma, ``Coding theorem for systematic {LDGM} codes
  under list decoding,'' in \emph{Inf. Theory Workshop}, Guangzhou, China, Nov.
  2018, pp. 1--5.

\bibitem{MacKay1999Irregular}
D.~MacKay, S.~Wilson, and M.~Davey, ``Comparison of constructions of irregular
  {Gallager} codes,'' \emph{IEEE Trans. Commun.}, vol.~47, no.~10, pp.
  1449--1454, Oct. 1999.

\bibitem{Urbanke2001Irregular2}
T.~Richardson and R.~Urbanke, ``The capacity of low-density parity-check codes
  under message-passing decoding,'' \emph{IEEE Trans. Inf. Theory}, vol.~47,
  no.~2, pp. 599--618, Feb. 2001.

\bibitem{Ma2017Systematic}
X.~Ma, K.~Huang, and B.~Bai, ``Systematic block {Markov} superposition
  transmission of repetition codes,'' \emph{IEEE Trans. Inf. Theory}, vol.~64,
  no.~3, pp. 1604--1620, Mar. 2018.

\bibitem{Benedetto1996IRWEF}
S.~Benedetto and G.~Montorsi, ``Unveiling turbo codes: Some results on parallel
  concatenated coding schemes,'' \emph{IEEE Trans. Inf. Theory}, vol.~42,
  no.~2, pp. 409--428, Mar. 1996.

\bibitem{Arikan2009ChannelPolarization}
E.~{Arikan}, ``Channel polarization: A method for constructing
  capacity-achieving codes for symmetric binary-input memoryless channels,''
  \emph{IEEE Trans. Inf. Theory}, vol.~55, no.~7, pp. 3051--3073, July. 2009.

\bibitem{Forney2001Graph}
G.~Forney, ``Codes on graphs: Normal realizations,'' \emph{IEEE Trans. Inf.
  Theory}, vol.~47, no.~2, pp. 520--548, Feb. 2001.

\bibitem{Felstrom1999SCLDPC}
A.~Felstrom and K.~Zigangirov, ``Time-varying periodic convolutional codes with
  low-density parity-check matrix,'' \emph{IEEE Trans. Inf. Theory}, vol.~45,
  no.~6, pp. 2181--2191, Sept. 1999.

\bibitem{Ma2015Block}
X.~Ma, C.~Liang, K.~Huang, and Q.~Zhuang, ``Block {Markov} superposition
  transmission: Construction of big convolutional codes from short codes,''
  \emph{IEEE Trans. Inf. Theory}, vol.~61, no.~6, pp. 3150--3163, Jun. 2015.

\end{thebibliography}
\end{document}